\newif\ifarxiv
\def\BibTeX{{\rm B\kern-.05em{\sc i\kern-.025em b}\kern-.08em
    T\kern-.1667em\lower.7ex\hbox{E}\kern-.125em}}
 \pgfplotsset{%
      compat=1.18,
     layers/standard/.define layer set={%
         background,axis background,axis grid,axis ticks,axis lines,axis tick labels,pre main,main,axis descriptions,axis foreground%
     }{
         grid style={/pgfplots/on layer=axis grid},%
         tick style={/pgfplots/on layer=axis ticks},%
         axis line style={/pgfplots/on layer=axis lines},%
         label style={/pgfplots/on layer=axis descriptions},%
         legend style={/pgfplots/on layer=axis descriptions},%
         title style={/pgfplots/on layer=axis descriptions},%
         colorbar style={/pgfplots/on layer=axis descriptions},%
         ticklabel style={/pgfplots/on layer=axis tick labels},%
         axis background@ style={/pgfplots/on layer=axis background},%
         3d box foreground style={/pgfplots/on layer=axis foreground},%
     },
 }
 \pgfplotsset{
 colormap={plots1}{rgb(0.00000000)=(0.30980000,0.10090000,0.23840000)
 rgb(0.00392157)=(0.30690000,0.10310000,0.24320000)
 rgb(0.00784314)=(0.30410000,0.10530000,0.24800000)
 rgb(0.01176471)=(0.30120000,0.10760000,0.25300000)
 rgb(0.01568627)=(0.29840000,0.11020000,0.25800000)
 rgb(0.01960784)=(0.29550000,0.11280000,0.26310000)
 rgb(0.02352941)=(0.29270000,0.11540000,0.26820000)
 rgb(0.02745098)=(0.28980000,0.11830000,0.27350000)
 rgb(0.03137255)=(0.28690000,0.12120000,0.27890000)
 rgb(0.03529412)=(0.28410000,0.12430000,0.28430000)
 rgb(0.03921569)=(0.28120000,0.12750000,0.28990000)
 rgb(0.04313725)=(0.27830000,0.13090000,0.29550000)
 rgb(0.04705882)=(0.27540000,0.13430000,0.30120000)
 rgb(0.05098039)=(0.27250000,0.13790000,0.30700000)
 rgb(0.05490196)=(0.26960000,0.14160000,0.31290000)
 rgb(0.05882353)=(0.26670000,0.14550000,0.31890000)
 rgb(0.06274510)=(0.26370000,0.14940000,0.32490000)
 rgb(0.06666667)=(0.26080000,0.15350000,0.33110000)
 rgb(0.07058824)=(0.25780000,0.15770000,0.33730000)
 rgb(0.07450980)=(0.25490000,0.16210000,0.34360000)
 rgb(0.07843137)=(0.25190000,0.16650000,0.34990000)
 rgb(0.08235294)=(0.24900000,0.17110000,0.35630000)
 rgb(0.08627451)=(0.24600000,0.17590000,0.36280000)
 rgb(0.09019608)=(0.24310000,0.18070000,0.36930000)
 rgb(0.09411765)=(0.24020000,0.18580000,0.37590000)
 rgb(0.09803922)=(0.23730000,0.19080000,0.38250000)
 rgb(0.10196078)=(0.23440000,0.19610000,0.38920000)
 rgb(0.10588235)=(0.23160000,0.20140000,0.39590000)
 rgb(0.10980392)=(0.22870000,0.20690000,0.40260000)
 rgb(0.11372549)=(0.22600000,0.21250000,0.40940000)
 rgb(0.11764706)=(0.22330000,0.21820000,0.41610000)
 rgb(0.12156863)=(0.22070000,0.22410000,0.42290000)
 rgb(0.12549020)=(0.21820000,0.23000000,0.42970000)
 rgb(0.12941176)=(0.21580000,0.23610000,0.43650000)
 rgb(0.13333333)=(0.21340000,0.24220000,0.44330000)
 rgb(0.13725490)=(0.21130000,0.24850000,0.45000000)
 rgb(0.14117647)=(0.20920000,0.25480000,0.45680000)
 rgb(0.14509804)=(0.20740000,0.26130000,0.46350000)
 rgb(0.14901961)=(0.20570000,0.26780000,0.47030000)
 rgb(0.15294118)=(0.20430000,0.27440000,0.47700000)
 rgb(0.15686275)=(0.20300000,0.28110000,0.48360000)
 rgb(0.16078431)=(0.20200000,0.28790000,0.49020000)
 rgb(0.16470588)=(0.20130000,0.29480000,0.49680000)
 rgb(0.16862745)=(0.20080000,0.30170000,0.50340000)
 rgb(0.17254902)=(0.20070000,0.30870000,0.50990000)
 rgb(0.17647059)=(0.20080000,0.31580000,0.51640000)
 rgb(0.18039216)=(0.20130000,0.32290000,0.52280000)
 rgb(0.18431373)=(0.20210000,0.33010000,0.52920000)
 rgb(0.18823529)=(0.20330000,0.33740000,0.53550000)
 rgb(0.19215686)=(0.20480000,0.34470000,0.54180000)
 rgb(0.19607843)=(0.20680000,0.35200000,0.54800000)
 rgb(0.20000000)=(0.20900000,0.35940000,0.55420000)
 rgb(0.20392157)=(0.21170000,0.36690000,0.56030000)
 rgb(0.20784314)=(0.21470000,0.37430000,0.56640000)
 rgb(0.21176471)=(0.21810000,0.38190000,0.57250000)
 rgb(0.21568627)=(0.22180000,0.38940000,0.57850000)
 rgb(0.21960784)=(0.22590000,0.39700000,0.58450000)
 rgb(0.22352941)=(0.23040000,0.40460000,0.59040000)
 rgb(0.22745098)=(0.23530000,0.41230000,0.59620000)
 rgb(0.23137255)=(0.24040000,0.42000000,0.60210000)
 rgb(0.23529412)=(0.24590000,0.42770000,0.60790000)
 rgb(0.23921569)=(0.25170000,0.43540000,0.61360000)
 rgb(0.24313725)=(0.25780000,0.44320000,0.61930000)
 rgb(0.24705882)=(0.26420000,0.45100000,0.62500000)
 rgb(0.25098039)=(0.27090000,0.45870000,0.63060000)
 rgb(0.25490196)=(0.27790000,0.46650000,0.63620000)
 rgb(0.25882353)=(0.28510000,0.47440000,0.64170000)
 rgb(0.26274510)=(0.29260000,0.48220000,0.64720000)
 rgb(0.26666667)=(0.30030000,0.49000000,0.65270000)
 rgb(0.27058824)=(0.30830000,0.49780000,0.65810000)
 rgb(0.27450980)=(0.31650000,0.50560000,0.66350000)
 rgb(0.27843137)=(0.32480000,0.51350000,0.66880000)
 rgb(0.28235294)=(0.33340000,0.52130000,0.67410000)
 rgb(0.28627451)=(0.34220000,0.52910000,0.67930000)
 rgb(0.29019608)=(0.35120000,0.53690000,0.68440000)
 rgb(0.29411765)=(0.36030000,0.54460000,0.68950000)
 rgb(0.29803922)=(0.36960000,0.55240000,0.69460000)
 rgb(0.30196078)=(0.37900000,0.56010000,0.69960000)
 rgb(0.30588235)=(0.38860000,0.56780000,0.70450000)
 rgb(0.30980392)=(0.39830000,0.57540000,0.70930000)
 rgb(0.31372549)=(0.40820000,0.58300000,0.71410000)
 rgb(0.31764706)=(0.41820000,0.59050000,0.71870000)
 rgb(0.32156863)=(0.42830000,0.59800000,0.72330000)
 rgb(0.32549020)=(0.43850000,0.60550000,0.72780000)
 rgb(0.32941176)=(0.44890000,0.61280000,0.73220000)
 rgb(0.33333333)=(0.45930000,0.62010000,0.73640000)
 rgb(0.33725490)=(0.46980000,0.62730000,0.74060000)
 rgb(0.34117647)=(0.48040000,0.63450000,0.74460000)
 rgb(0.34509804)=(0.49100000,0.64150000,0.74850000)
 rgb(0.34901961)=(0.50170000,0.64840000,0.75230000)
 rgb(0.35294118)=(0.51250000,0.65520000,0.75590000)
 rgb(0.35686275)=(0.52330000,0.66200000,0.75930000)
 rgb(0.36078431)=(0.53410000,0.66850000,0.76260000)
 rgb(0.36470588)=(0.54490000,0.67490000,0.76560000)
 rgb(0.36862745)=(0.55580000,0.68120000,0.76850000)
 rgb(0.37254902)=(0.56660000,0.68740000,0.77120000)
 rgb(0.37647059)=(0.57750000,0.69330000,0.77370000)
 rgb(0.38039216)=(0.58830000,0.69910000,0.77600000)
 rgb(0.38431373)=(0.59900000,0.70470000,0.77800000)
 rgb(0.38823529)=(0.60980000,0.71010000,0.77980000)
 rgb(0.39215686)=(0.62040000,0.71530000,0.78140000)
 rgb(0.39607843)=(0.63100000,0.72030000,0.78260000)
 rgb(0.40000000)=(0.64140000,0.72500000,0.78360000)
 rgb(0.40392157)=(0.65180000,0.72950000,0.78440000)
 rgb(0.40784314)=(0.66200000,0.73380000,0.78480000)
 rgb(0.41176471)=(0.67210000,0.73780000,0.78500000)
 rgb(0.41568627)=(0.68200000,0.74150000,0.78480000)
 rgb(0.41960784)=(0.69170000,0.74490000,0.78430000)
 rgb(0.42352941)=(0.70130000,0.74810000,0.78360000)
 rgb(0.42745098)=(0.71060000,0.75100000,0.78250000)
 rgb(0.43137255)=(0.71980000,0.75360000,0.78100000)
 rgb(0.43529412)=(0.72870000,0.75580000,0.77930000)
 rgb(0.43921569)=(0.73730000,0.75780000,0.77720000)
 rgb(0.44313725)=(0.74570000,0.75940000,0.77480000)
 rgb(0.44705882)=(0.75380000,0.76080000,0.77200000)
 rgb(0.45098039)=(0.76160000,0.76180000,0.76900000)
 rgb(0.45490196)=(0.76910000,0.76250000,0.76560000)
 rgb(0.45882353)=(0.77640000,0.76280000,0.76190000)
 rgb(0.46274510)=(0.78320000,0.76290000,0.75780000)
 rgb(0.46666667)=(0.78980000,0.76260000,0.75350000)
 rgb(0.47058824)=(0.79610000,0.76200000,0.74880000)
 rgb(0.47450980)=(0.80200000,0.76110000,0.74390000)
 rgb(0.47843137)=(0.80760000,0.75980000,0.73870000)
 rgb(0.48235294)=(0.81280000,0.75830000,0.73320000)
 rgb(0.48627451)=(0.81770000,0.75650000,0.72740000)
 rgb(0.49019608)=(0.82220000,0.75430000,0.72130000)
 rgb(0.49411765)=(0.82640000,0.75190000,0.71510000)
 rgb(0.49803922)=(0.83030000,0.74920000,0.70860000)
 rgb(0.50196078)=(0.83390000,0.74630000,0.70180000)
 rgb(0.50588235)=(0.83710000,0.74310000,0.69490000)
 rgb(0.50980392)=(0.84000000,0.73960000,0.68770000)
 rgb(0.51372549)=(0.84250000,0.73590000,0.68040000)
 rgb(0.51764706)=(0.84480000,0.73200000,0.67280000)
 rgb(0.52156863)=(0.84670000,0.72780000,0.66520000)
 rgb(0.52549020)=(0.84840000,0.72340000,0.65730000)
 rgb(0.52941176)=(0.84970000,0.71890000,0.64930000)
 rgb(0.53333333)=(0.85080000,0.71410000,0.64120000)
 rgb(0.53725490)=(0.85160000,0.70920000,0.63300000)
 rgb(0.54117647)=(0.85220000,0.70410000,0.62460000)
 rgb(0.54509804)=(0.85250000,0.69880000,0.61620000)
 rgb(0.54901961)=(0.85260000,0.69340000,0.60760000)
 rgb(0.55294118)=(0.85240000,0.68780000,0.59900000)
 rgb(0.55686275)=(0.85200000,0.68210000,0.59030000)
 rgb(0.56078431)=(0.85140000,0.67620000,0.58150000)
 rgb(0.56470588)=(0.85050000,0.67030000,0.57270000)
 rgb(0.56862745)=(0.84950000,0.66420000,0.56380000)
 rgb(0.57254902)=(0.84830000,0.65800000,0.55490000)
 rgb(0.57647059)=(0.84690000,0.65170000,0.54590000)
 rgb(0.58039216)=(0.84530000,0.64530000,0.53690000)
 rgb(0.58431373)=(0.84350000,0.63880000,0.52780000)
 rgb(0.58823529)=(0.84160000,0.63220000,0.51880000)
 rgb(0.59215686)=(0.83950000,0.62550000,0.50970000)
 rgb(0.59607843)=(0.83720000,0.61880000,0.50050000)
 rgb(0.60000000)=(0.83480000,0.61190000,0.49140000)
 rgb(0.60392157)=(0.83230000,0.60500000,0.48230000)
 rgb(0.60784314)=(0.82960000,0.59800000,0.47310000)
 rgb(0.61176471)=(0.82670000,0.59090000,0.46400000)
 rgb(0.61568627)=(0.82370000,0.58380000,0.45480000)
 rgb(0.61960784)=(0.82060000,0.57650000,0.44570000)
 rgb(0.62352941)=(0.81730000,0.56920000,0.43660000)
 rgb(0.62745098)=(0.81390000,0.56190000,0.42750000)
 rgb(0.63137255)=(0.81040000,0.55450000,0.41830000)
 rgb(0.63529412)=(0.80670000,0.54690000,0.40930000)
 rgb(0.63921569)=(0.80290000,0.53940000,0.40020000)
 rgb(0.64313725)=(0.79890000,0.53170000,0.39110000)
 rgb(0.64705882)=(0.79480000,0.52400000,0.38210000)
 rgb(0.65098039)=(0.79060000,0.51620000,0.37310000)
 rgb(0.65490196)=(0.78620000,0.50840000,0.36420000)
 rgb(0.65882353)=(0.78170000,0.50040000,0.35530000)
 rgb(0.66274510)=(0.77710000,0.49240000,0.34640000)
 rgb(0.66666667)=(0.77230000,0.48440000,0.33760000)
 rgb(0.67058824)=(0.76730000,0.47620000,0.32880000)
 rgb(0.67450980)=(0.76220000,0.46800000,0.32010000)
 rgb(0.67843137)=(0.75700000,0.45970000,0.31140000)
 rgb(0.68235294)=(0.75160000,0.45140000,0.30280000)
 rgb(0.68627451)=(0.74600000,0.44300000,0.29430000)
 rgb(0.69019608)=(0.74030000,0.43450000,0.28590000)
 rgb(0.69411765)=(0.73450000,0.42600000,0.27760000)
 rgb(0.69803922)=(0.72850000,0.41740000,0.26940000)
 rgb(0.70196078)=(0.72230000,0.40880000,0.26130000)
 rgb(0.70588235)=(0.71600000,0.40010000,0.25340000)
 rgb(0.70980392)=(0.70960000,0.39140000,0.24550000)
 rgb(0.71372549)=(0.70300000,0.38260000,0.23780000)
 rgb(0.71764706)=(0.69630000,0.37380000,0.23030000)
 rgb(0.72156863)=(0.68940000,0.36500000,0.22290000)
 rgb(0.72549020)=(0.68250000,0.35620000,0.21570000)
 rgb(0.72941176)=(0.67540000,0.34740000,0.20870000)
 rgb(0.73333333)=(0.66820000,0.33850000,0.20190000)
 rgb(0.73725490)=(0.66090000,0.32970000,0.19530000)
 rgb(0.74117647)=(0.65350000,0.32090000,0.18900000)
 rgb(0.74509804)=(0.64600000,0.31220000,0.18290000)
 rgb(0.74901961)=(0.63850000,0.30350000,0.17700000)
 rgb(0.75294118)=(0.63090000,0.29490000,0.17140000)
 rgb(0.75686275)=(0.62330000,0.28630000,0.16600000)
 rgb(0.76078431)=(0.61560000,0.27790000,0.16100000)
 rgb(0.76470588)=(0.60790000,0.26950000,0.15620000)
 rgb(0.76862745)=(0.60020000,0.26120000,0.15160000)
 rgb(0.77254902)=(0.59250000,0.25310000,0.14750000)
 rgb(0.77647059)=(0.58490000,0.24500000,0.14350000)
 rgb(0.78039216)=(0.57720000,0.23720000,0.13990000)
 rgb(0.78431373)=(0.56960000,0.22950000,0.13660000)
 rgb(0.78823529)=(0.56210000,0.22190000,0.13360000)
 rgb(0.79215686)=(0.55460000,0.21450000,0.13080000)
 rgb(0.79607843)=(0.54720000,0.20730000,0.12840000)
 rgb(0.80000000)=(0.53990000,0.20020000,0.12620000)
 rgb(0.80392157)=(0.53270000,0.19340000,0.12430000)
 rgb(0.80784314)=(0.52560000,0.18680000,0.12270000)
 rgb(0.81176471)=(0.51860000,0.18030000,0.12140000)
 rgb(0.81568627)=(0.51170000,0.17410000,0.12030000)
 rgb(0.81960784)=(0.50500000,0.16810000,0.11950000)
 rgb(0.82352941)=(0.49840000,0.16230000,0.11890000)
 rgb(0.82745098)=(0.49190000,0.15670000,0.11850000)
 rgb(0.83137255)=(0.48560000,0.15130000,0.11840000)
 rgb(0.83529412)=(0.47940000,0.14620000,0.11840000)
 rgb(0.83921569)=(0.47330000,0.14130000,0.11870000)
 rgb(0.84313725)=(0.46740000,0.13650000,0.11910000)
 rgb(0.84705882)=(0.46160000,0.13200000,0.11970000)
 rgb(0.85098039)=(0.45600000,0.12770000,0.12050000)
 rgb(0.85490196)=(0.45050000,0.12360000,0.12150000)
 rgb(0.85882353)=(0.44510000,0.11980000,0.12260000)
 rgb(0.86274510)=(0.43990000,0.11620000,0.12390000)
 rgb(0.86666667)=(0.43480000,0.11280000,0.12540000)
 rgb(0.87058824)=(0.42980000,0.10960000,0.12690000)
 rgb(0.87450980)=(0.42500000,0.10650000,0.12860000)
 rgb(0.87843137)=(0.42030000,0.10370000,0.13050000)
 rgb(0.88235294)=(0.41570000,0.10110000,0.13240000)
 rgb(0.88627451)=(0.41120000,0.09870000,0.13450000)
 rgb(0.89019608)=(0.40680000,0.09640000,0.13660000)
 rgb(0.89411765)=(0.40260000,0.09440000,0.13890000)
 rgb(0.89803922)=(0.39840000,0.09250000,0.14130000)
 rgb(0.90196078)=(0.39430000,0.09090000,0.14380000)
 rgb(0.90588235)=(0.39040000,0.08940000,0.14640000)
 rgb(0.90980392)=(0.38650000,0.08810000,0.14900000)
 rgb(0.91372549)=(0.38270000,0.08700000,0.15180000)
 rgb(0.91764706)=(0.37900000,0.08590000,0.15470000)
 rgb(0.92156863)=(0.37530000,0.08510000,0.15760000)
 rgb(0.92549020)=(0.37170000,0.08450000,0.16060000)
 rgb(0.92941176)=(0.36820000,0.08410000,0.16380000)
 rgb(0.93333333)=(0.36480000,0.08370000,0.16700000)
 rgb(0.93725490)=(0.36140000,0.08350000,0.17030000)
 rgb(0.94117647)=(0.35810000,0.08350000,0.17360000)
 rgb(0.94509804)=(0.35480000,0.08360000,0.17710000)
 rgb(0.94901961)=(0.35160000,0.08390000,0.18060000)
 rgb(0.95294118)=(0.34840000,0.08430000,0.18420000)
 rgb(0.95686275)=(0.34530000,0.08480000,0.18790000)
 rgb(0.96078431)=(0.34220000,0.08540000,0.19170000)
 rgb(0.96470588)=(0.33910000,0.08620000,0.19550000)
 rgb(0.96862745)=(0.33610000,0.08720000,0.19940000)
 rgb(0.97254902)=(0.33310000,0.08820000,0.20350000)
 rgb(0.97647059)=(0.33010000,0.08940000,0.20750000)
 rgb(0.98039216)=(0.32720000,0.09070000,0.21170000)
 rgb(0.98431373)=(0.32420000,0.09210000,0.21600000)
 rgb(0.98823529)=(0.32130000,0.09360000,0.22030000)
 rgb(0.99215686)=(0.31840000,0.09530000,0.22470000)
 rgb(0.99607843)=(0.31550000,0.09700000,0.22920000)
 rgb(1.00000000)=(0.31260000,0.09890000,0.23380000)},
 }
 \pgfplotsset{
 colormap={plots1}{rgb(0.00000000)=(0.30980000,0.10090000,0.23840000)
 rgb(0.00392157)=(0.30690000,0.10310000,0.24320000)
 rgb(0.00784314)=(0.30410000,0.10530000,0.24800000)
 rgb(0.01176471)=(0.30120000,0.10760000,0.25300000)
 rgb(0.01568627)=(0.29840000,0.11020000,0.25800000)
 rgb(0.01960784)=(0.29550000,0.11280000,0.26310000)
 rgb(0.02352941)=(0.29270000,0.11540000,0.26820000)
 rgb(0.02745098)=(0.28980000,0.11830000,0.27350000)
 rgb(0.03137255)=(0.28690000,0.12120000,0.27890000)
 rgb(0.03529412)=(0.28410000,0.12430000,0.28430000)
 rgb(0.03921569)=(0.28120000,0.12750000,0.28990000)
 rgb(0.04313725)=(0.27830000,0.13090000,0.29550000)
 rgb(0.04705882)=(0.27540000,0.13430000,0.30120000)
 rgb(0.05098039)=(0.27250000,0.13790000,0.30700000)
 rgb(0.05490196)=(0.26960000,0.14160000,0.31290000)
 rgb(0.05882353)=(0.26670000,0.14550000,0.31890000)
 rgb(0.06274510)=(0.26370000,0.14940000,0.32490000)
 rgb(0.06666667)=(0.26080000,0.15350000,0.33110000)
 rgb(0.07058824)=(0.25780000,0.15770000,0.33730000)
 rgb(0.07450980)=(0.25490000,0.16210000,0.34360000)
 rgb(0.07843137)=(0.25190000,0.16650000,0.34990000)
 rgb(0.08235294)=(0.24900000,0.17110000,0.35630000)
 rgb(0.08627451)=(0.24600000,0.17590000,0.36280000)
 rgb(0.09019608)=(0.24310000,0.18070000,0.36930000)
 rgb(0.09411765)=(0.24020000,0.18580000,0.37590000)
 rgb(0.09803922)=(0.23730000,0.19080000,0.38250000)
 rgb(0.10196078)=(0.23440000,0.19610000,0.38920000)
 rgb(0.10588235)=(0.23160000,0.20140000,0.39590000)
 rgb(0.10980392)=(0.22870000,0.20690000,0.40260000)
 rgb(0.11372549)=(0.22600000,0.21250000,0.40940000)
 rgb(0.11764706)=(0.22330000,0.21820000,0.41610000)
 rgb(0.12156863)=(0.22070000,0.22410000,0.42290000)
 rgb(0.12549020)=(0.21820000,0.23000000,0.42970000)
 rgb(0.12941176)=(0.21580000,0.23610000,0.43650000)
 rgb(0.13333333)=(0.21340000,0.24220000,0.44330000)
 rgb(0.13725490)=(0.21130000,0.24850000,0.45000000)
 rgb(0.14117647)=(0.20920000,0.25480000,0.45680000)
 rgb(0.14509804)=(0.20740000,0.26130000,0.46350000)
 rgb(0.14901961)=(0.20570000,0.26780000,0.47030000)
 rgb(0.15294118)=(0.20430000,0.27440000,0.47700000)
 rgb(0.15686275)=(0.20300000,0.28110000,0.48360000)
 rgb(0.16078431)=(0.20200000,0.28790000,0.49020000)
 rgb(0.16470588)=(0.20130000,0.29480000,0.49680000)
 rgb(0.16862745)=(0.20080000,0.30170000,0.50340000)
 rgb(0.17254902)=(0.20070000,0.30870000,0.50990000)
 rgb(0.17647059)=(0.20080000,0.31580000,0.51640000)
 rgb(0.18039216)=(0.20130000,0.32290000,0.52280000)
 rgb(0.18431373)=(0.20210000,0.33010000,0.52920000)
 rgb(0.18823529)=(0.20330000,0.33740000,0.53550000)
 rgb(0.19215686)=(0.20480000,0.34470000,0.54180000)
 rgb(0.19607843)=(0.20680000,0.35200000,0.54800000)
 rgb(0.20000000)=(0.20900000,0.35940000,0.55420000)
 rgb(0.20392157)=(0.21170000,0.36690000,0.56030000)
 rgb(0.20784314)=(0.21470000,0.37430000,0.56640000)
 rgb(0.21176471)=(0.21810000,0.38190000,0.57250000)
 rgb(0.21568627)=(0.22180000,0.38940000,0.57850000)
 rgb(0.21960784)=(0.22590000,0.39700000,0.58450000)
 rgb(0.22352941)=(0.23040000,0.40460000,0.59040000)
 rgb(0.22745098)=(0.23530000,0.41230000,0.59620000)
 rgb(0.23137255)=(0.24040000,0.42000000,0.60210000)
 rgb(0.23529412)=(0.24590000,0.42770000,0.60790000)
 rgb(0.23921569)=(0.25170000,0.43540000,0.61360000)
 rgb(0.24313725)=(0.25780000,0.44320000,0.61930000)
 rgb(0.24705882)=(0.26420000,0.45100000,0.62500000)
 rgb(0.25098039)=(0.27090000,0.45870000,0.63060000)
 rgb(0.25490196)=(0.27790000,0.46650000,0.63620000)
 rgb(0.25882353)=(0.28510000,0.47440000,0.64170000)
 rgb(0.26274510)=(0.29260000,0.48220000,0.64720000)
 rgb(0.26666667)=(0.30030000,0.49000000,0.65270000)
 rgb(0.27058824)=(0.30830000,0.49780000,0.65810000)
 rgb(0.27450980)=(0.31650000,0.50560000,0.66350000)
 rgb(0.27843137)=(0.32480000,0.51350000,0.66880000)
 rgb(0.28235294)=(0.33340000,0.52130000,0.67410000)
 rgb(0.28627451)=(0.34220000,0.52910000,0.67930000)
 rgb(0.29019608)=(0.35120000,0.53690000,0.68440000)
 rgb(0.29411765)=(0.36030000,0.54460000,0.68950000)
 rgb(0.29803922)=(0.36960000,0.55240000,0.69460000)
 rgb(0.30196078)=(0.37900000,0.56010000,0.69960000)
 rgb(0.30588235)=(0.38860000,0.56780000,0.70450000)
 rgb(0.30980392)=(0.39830000,0.57540000,0.70930000)
 rgb(0.31372549)=(0.40820000,0.58300000,0.71410000)
 rgb(0.31764706)=(0.41820000,0.59050000,0.71870000)
 rgb(0.32156863)=(0.42830000,0.59800000,0.72330000)
 rgb(0.32549020)=(0.43850000,0.60550000,0.72780000)
 rgb(0.32941176)=(0.44890000,0.61280000,0.73220000)
 rgb(0.33333333)=(0.45930000,0.62010000,0.73640000)
 rgb(0.33725490)=(0.46980000,0.62730000,0.74060000)
 rgb(0.34117647)=(0.48040000,0.63450000,0.74460000)
 rgb(0.34509804)=(0.49100000,0.64150000,0.74850000)
 rgb(0.34901961)=(0.50170000,0.64840000,0.75230000)
 rgb(0.35294118)=(0.51250000,0.65520000,0.75590000)
 rgb(0.35686275)=(0.52330000,0.66200000,0.75930000)
 rgb(0.36078431)=(0.53410000,0.66850000,0.76260000)
 rgb(0.36470588)=(0.54490000,0.67490000,0.76560000)
 rgb(0.36862745)=(0.55580000,0.68120000,0.76850000)
 rgb(0.37254902)=(0.56660000,0.68740000,0.77120000)
 rgb(0.37647059)=(0.57750000,0.69330000,0.77370000)
 rgb(0.38039216)=(0.58830000,0.69910000,0.77600000)
 rgb(0.38431373)=(0.59900000,0.70470000,0.77800000)
 rgb(0.38823529)=(0.60980000,0.71010000,0.77980000)
 rgb(0.39215686)=(0.62040000,0.71530000,0.78140000)
 rgb(0.39607843)=(0.63100000,0.72030000,0.78260000)
 rgb(0.40000000)=(0.64140000,0.72500000,0.78360000)
 rgb(0.40392157)=(0.65180000,0.72950000,0.78440000)
 rgb(0.40784314)=(0.66200000,0.73380000,0.78480000)
 rgb(0.41176471)=(0.67210000,0.73780000,0.78500000)
 rgb(0.41568627)=(0.68200000,0.74150000,0.78480000)
 rgb(0.41960784)=(0.69170000,0.74490000,0.78430000)
 rgb(0.42352941)=(0.70130000,0.74810000,0.78360000)
 rgb(0.42745098)=(0.71060000,0.75100000,0.78250000)
 rgb(0.43137255)=(0.71980000,0.75360000,0.78100000)
 rgb(0.43529412)=(0.72870000,0.75580000,0.77930000)
 rgb(0.43921569)=(0.73730000,0.75780000,0.77720000)
 rgb(0.44313725)=(0.74570000,0.75940000,0.77480000)
 rgb(0.44705882)=(0.75380000,0.76080000,0.77200000)
 rgb(0.45098039)=(0.76160000,0.76180000,0.76900000)
 rgb(0.45490196)=(0.76910000,0.76250000,0.76560000)
 rgb(0.45882353)=(0.77640000,0.76280000,0.76190000)
 rgb(0.46274510)=(0.78320000,0.76290000,0.75780000)
 rgb(0.46666667)=(0.78980000,0.76260000,0.75350000)
 rgb(0.47058824)=(0.79610000,0.76200000,0.74880000)
 rgb(0.47450980)=(0.80200000,0.76110000,0.74390000)
 rgb(0.47843137)=(0.80760000,0.75980000,0.73870000)
 rgb(0.48235294)=(0.81280000,0.75830000,0.73320000)
 rgb(0.48627451)=(0.81770000,0.75650000,0.72740000)
 rgb(0.49019608)=(0.82220000,0.75430000,0.72130000)
 rgb(0.49411765)=(0.82640000,0.75190000,0.71510000)
 rgb(0.49803922)=(0.83030000,0.74920000,0.70860000)
 rgb(0.50196078)=(0.83390000,0.74630000,0.70180000)
 rgb(0.50588235)=(0.83710000,0.74310000,0.69490000)
 rgb(0.50980392)=(0.84000000,0.73960000,0.68770000)
 rgb(0.51372549)=(0.84250000,0.73590000,0.68040000)
 rgb(0.51764706)=(0.84480000,0.73200000,0.67280000)
 rgb(0.52156863)=(0.84670000,0.72780000,0.66520000)
 rgb(0.52549020)=(0.84840000,0.72340000,0.65730000)
 rgb(0.52941176)=(0.84970000,0.71890000,0.64930000)
 rgb(0.53333333)=(0.85080000,0.71410000,0.64120000)
 rgb(0.53725490)=(0.85160000,0.70920000,0.63300000)
 rgb(0.54117647)=(0.85220000,0.70410000,0.62460000)
 rgb(0.54509804)=(0.85250000,0.69880000,0.61620000)
 rgb(0.54901961)=(0.85260000,0.69340000,0.60760000)
 rgb(0.55294118)=(0.85240000,0.68780000,0.59900000)
 rgb(0.55686275)=(0.85200000,0.68210000,0.59030000)
 rgb(0.56078431)=(0.85140000,0.67620000,0.58150000)
 rgb(0.56470588)=(0.85050000,0.67030000,0.57270000)
 rgb(0.56862745)=(0.84950000,0.66420000,0.56380000)
 rgb(0.57254902)=(0.84830000,0.65800000,0.55490000)
 rgb(0.57647059)=(0.84690000,0.65170000,0.54590000)
 rgb(0.58039216)=(0.84530000,0.64530000,0.53690000)
 rgb(0.58431373)=(0.84350000,0.63880000,0.52780000)
 rgb(0.58823529)=(0.84160000,0.63220000,0.51880000)
 rgb(0.59215686)=(0.83950000,0.62550000,0.50970000)
 rgb(0.59607843)=(0.83720000,0.61880000,0.50050000)
 rgb(0.60000000)=(0.83480000,0.61190000,0.49140000)
 rgb(0.60392157)=(0.83230000,0.60500000,0.48230000)
 rgb(0.60784314)=(0.82960000,0.59800000,0.47310000)
 rgb(0.61176471)=(0.82670000,0.59090000,0.46400000)
 rgb(0.61568627)=(0.82370000,0.58380000,0.45480000)
 rgb(0.61960784)=(0.82060000,0.57650000,0.44570000)
 rgb(0.62352941)=(0.81730000,0.56920000,0.43660000)
 rgb(0.62745098)=(0.81390000,0.56190000,0.42750000)
 rgb(0.63137255)=(0.81040000,0.55450000,0.41830000)
 rgb(0.63529412)=(0.80670000,0.54690000,0.40930000)
 rgb(0.63921569)=(0.80290000,0.53940000,0.40020000)
 rgb(0.64313725)=(0.79890000,0.53170000,0.39110000)
 rgb(0.64705882)=(0.79480000,0.52400000,0.38210000)
 rgb(0.65098039)=(0.79060000,0.51620000,0.37310000)
 rgb(0.65490196)=(0.78620000,0.50840000,0.36420000)
 rgb(0.65882353)=(0.78170000,0.50040000,0.35530000)
 rgb(0.66274510)=(0.77710000,0.49240000,0.34640000)
 rgb(0.66666667)=(0.77230000,0.48440000,0.33760000)
 rgb(0.67058824)=(0.76730000,0.47620000,0.32880000)
 rgb(0.67450980)=(0.76220000,0.46800000,0.32010000)
 rgb(0.67843137)=(0.75700000,0.45970000,0.31140000)
 rgb(0.68235294)=(0.75160000,0.45140000,0.30280000)
 rgb(0.68627451)=(0.74600000,0.44300000,0.29430000)
 rgb(0.69019608)=(0.74030000,0.43450000,0.28590000)
 rgb(0.69411765)=(0.73450000,0.42600000,0.27760000)
 rgb(0.69803922)=(0.72850000,0.41740000,0.26940000)
 rgb(0.70196078)=(0.72230000,0.40880000,0.26130000)
 rgb(0.70588235)=(0.71600000,0.40010000,0.25340000)
 rgb(0.70980392)=(0.70960000,0.39140000,0.24550000)
 rgb(0.71372549)=(0.70300000,0.38260000,0.23780000)
 rgb(0.71764706)=(0.69630000,0.37380000,0.23030000)
 rgb(0.72156863)=(0.68940000,0.36500000,0.22290000)
 rgb(0.72549020)=(0.68250000,0.35620000,0.21570000)
 rgb(0.72941176)=(0.67540000,0.34740000,0.20870000)
 rgb(0.73333333)=(0.66820000,0.33850000,0.20190000)
 rgb(0.73725490)=(0.66090000,0.32970000,0.19530000)
 rgb(0.74117647)=(0.65350000,0.32090000,0.18900000)
 rgb(0.74509804)=(0.64600000,0.31220000,0.18290000)
 rgb(0.74901961)=(0.63850000,0.30350000,0.17700000)
 rgb(0.75294118)=(0.63090000,0.29490000,0.17140000)
 rgb(0.75686275)=(0.62330000,0.28630000,0.16600000)
 rgb(0.76078431)=(0.61560000,0.27790000,0.16100000)
 rgb(0.76470588)=(0.60790000,0.26950000,0.15620000)
 rgb(0.76862745)=(0.60020000,0.26120000,0.15160000)
 rgb(0.77254902)=(0.59250000,0.25310000,0.14750000)
 rgb(0.77647059)=(0.58490000,0.24500000,0.14350000)
 rgb(0.78039216)=(0.57720000,0.23720000,0.13990000)
 rgb(0.78431373)=(0.56960000,0.22950000,0.13660000)
 rgb(0.78823529)=(0.56210000,0.22190000,0.13360000)
 rgb(0.79215686)=(0.55460000,0.21450000,0.13080000)
 rgb(0.79607843)=(0.54720000,0.20730000,0.12840000)
 rgb(0.80000000)=(0.53990000,0.20020000,0.12620000)
 rgb(0.80392157)=(0.53270000,0.19340000,0.12430000)
 rgb(0.80784314)=(0.52560000,0.18680000,0.12270000)
 rgb(0.81176471)=(0.51860000,0.18030000,0.12140000)
 rgb(0.81568627)=(0.51170000,0.17410000,0.12030000)
 rgb(0.81960784)=(0.50500000,0.16810000,0.11950000)
 rgb(0.82352941)=(0.49840000,0.16230000,0.11890000)
 rgb(0.82745098)=(0.49190000,0.15670000,0.11850000)
 rgb(0.83137255)=(0.48560000,0.15130000,0.11840000)
 rgb(0.83529412)=(0.47940000,0.14620000,0.11840000)
 rgb(0.83921569)=(0.47330000,0.14130000,0.11870000)
 rgb(0.84313725)=(0.46740000,0.13650000,0.11910000)
 rgb(0.84705882)=(0.46160000,0.13200000,0.11970000)
 rgb(0.85098039)=(0.45600000,0.12770000,0.12050000)
 rgb(0.85490196)=(0.45050000,0.12360000,0.12150000)
 rgb(0.85882353)=(0.44510000,0.11980000,0.12260000)
 rgb(0.86274510)=(0.43990000,0.11620000,0.12390000)
 rgb(0.86666667)=(0.43480000,0.11280000,0.12540000)
 rgb(0.87058824)=(0.42980000,0.10960000,0.12690000)
 rgb(0.87450980)=(0.42500000,0.10650000,0.12860000)
 rgb(0.87843137)=(0.42030000,0.10370000,0.13050000)
 rgb(0.88235294)=(0.41570000,0.10110000,0.13240000)
 rgb(0.88627451)=(0.41120000,0.09870000,0.13450000)
 rgb(0.89019608)=(0.40680000,0.09640000,0.13660000)
 rgb(0.89411765)=(0.40260000,0.09440000,0.13890000)
 rgb(0.89803922)=(0.39840000,0.09250000,0.14130000)
 rgb(0.90196078)=(0.39430000,0.09090000,0.14380000)
 rgb(0.90588235)=(0.39040000,0.08940000,0.14640000)
 rgb(0.90980392)=(0.38650000,0.08810000,0.14900000)
 rgb(0.91372549)=(0.38270000,0.08700000,0.15180000)
 rgb(0.91764706)=(0.37900000,0.08590000,0.15470000)
 rgb(0.92156863)=(0.37530000,0.08510000,0.15760000)
 rgb(0.92549020)=(0.37170000,0.08450000,0.16060000)
 rgb(0.92941176)=(0.36820000,0.08410000,0.16380000)
 rgb(0.93333333)=(0.36480000,0.08370000,0.16700000)
 rgb(0.93725490)=(0.36140000,0.08350000,0.17030000)
 rgb(0.94117647)=(0.35810000,0.08350000,0.17360000)
 rgb(0.94509804)=(0.35480000,0.08360000,0.17710000)
 rgb(0.94901961)=(0.35160000,0.08390000,0.18060000)
 rgb(0.95294118)=(0.34840000,0.08430000,0.18420000)
 rgb(0.95686275)=(0.34530000,0.08480000,0.18790000)
 rgb(0.96078431)=(0.34220000,0.08540000,0.19170000)
 rgb(0.96470588)=(0.33910000,0.08620000,0.19550000)
 rgb(0.96862745)=(0.33610000,0.08720000,0.19940000)
 rgb(0.97254902)=(0.33310000,0.08820000,0.20350000)
 rgb(0.97647059)=(0.33010000,0.08940000,0.20750000)
 rgb(0.98039216)=(0.32720000,0.09070000,0.21170000)
 rgb(0.98431373)=(0.32420000,0.09210000,0.21600000)
 rgb(0.98823529)=(0.32130000,0.09360000,0.22030000)
 rgb(0.99215686)=(0.31840000,0.09530000,0.22470000)
 rgb(0.99607843)=(0.31550000,0.09700000,0.22920000)
 rgb(1.00000000)=(0.31260000,0.09890000,0.23380000)},
 }
 \pgfplotsset{
 colormap={plots2}{rgb(0.00000000)=(0.00130000,0.06980000,0.37950000)
 rgb(0.00392157)=(0.00240000,0.07650000,0.38350000)
 rgb(0.00784314)=(0.00330000,0.08310000,0.38750000)
 rgb(0.01176471)=(0.00410000,0.08960000,0.39150000)
 rgb(0.01568627)=(0.00490000,0.09590000,0.39550000)
 rgb(0.01960784)=(0.00560000,0.10230000,0.39940000)
 rgb(0.02352941)=(0.00620000,0.10850000,0.40340000)
 rgb(0.02745098)=(0.00670000,0.11470000,0.40730000)
 rgb(0.03137255)=(0.00710000,0.12080000,0.41130000)
 rgb(0.03529412)=(0.00750000,0.12700000,0.41520000)
 rgb(0.03921569)=(0.00780000,0.13310000,0.41920000)
 rgb(0.04313725)=(0.00810000,0.13910000,0.42310000)
 rgb(0.04705882)=(0.00840000,0.14520000,0.42700000)
 rgb(0.05098039)=(0.00860000,0.15110000,0.43090000)
 rgb(0.05490196)=(0.00880000,0.15710000,0.43480000)
 rgb(0.05882353)=(0.00890000,0.16320000,0.43870000)
 rgb(0.06274510)=(0.00910000,0.16910000,0.44260000)
 rgb(0.06666667)=(0.00920000,0.17510000,0.44650000)
 rgb(0.07058824)=(0.00930000,0.18110000,0.45030000)
 rgb(0.07450980)=(0.00940000,0.18710000,0.45420000)
 rgb(0.07843137)=(0.00940000,0.19300000,0.45810000)
 rgb(0.08235294)=(0.00950000,0.19900000,0.46200000)
 rgb(0.08627451)=(0.00960000,0.20500000,0.46580000)
 rgb(0.09019608)=(0.00960000,0.21100000,0.46970000)
 rgb(0.09411765)=(0.00970000,0.21700000,0.47360000)
 rgb(0.09803922)=(0.00970000,0.22310000,0.47750000)
 rgb(0.10196078)=(0.00980000,0.22910000,0.48140000)
 rgb(0.10588235)=(0.00990000,0.23520000,0.48520000)
 rgb(0.10980392)=(0.01000000,0.24130000,0.48920000)
 rgb(0.11372549)=(0.01010000,0.24740000,0.49310000)
 rgb(0.11764706)=(0.01030000,0.25350000,0.49700000)
 rgb(0.12156863)=(0.01050000,0.25970000,0.50100000)
 rgb(0.12549020)=(0.01080000,0.26590000,0.50490000)
 rgb(0.12941176)=(0.01120000,0.27200000,0.50890000)
 rgb(0.13333333)=(0.01170000,0.27830000,0.51290000)
 rgb(0.13725490)=(0.01230000,0.28460000,0.51700000)
 rgb(0.14117647)=(0.01290000,0.29090000,0.52100000)
 rgb(0.14509804)=(0.01380000,0.29720000,0.52510000)
 rgb(0.14901961)=(0.01480000,0.30360000,0.52920000)
 rgb(0.15294118)=(0.01610000,0.31000000,0.53330000)
 rgb(0.15686275)=(0.01770000,0.31650000,0.53750000)
 rgb(0.16078431)=(0.01960000,0.32300000,0.54170000)
 rgb(0.16470588)=(0.02190000,0.32960000,0.54590000)
 rgb(0.16862745)=(0.02470000,0.33610000,0.55020000)
 rgb(0.17254902)=(0.02800000,0.34280000,0.55450000)
 rgb(0.17647059)=(0.03200000,0.34950000,0.55890000)
 rgb(0.18039216)=(0.03680000,0.35630000,0.56330000)
 rgb(0.18431373)=(0.04220000,0.36320000,0.56780000)
 rgb(0.18823529)=(0.04800000,0.37010000,0.57230000)
 rgb(0.19215686)=(0.05430000,0.37710000,0.57690000)
 rgb(0.19607843)=(0.06100000,0.38410000,0.58160000)
 rgb(0.20000000)=(0.06810000,0.39130000,0.58630000)
 rgb(0.20392157)=(0.07550000,0.39850000,0.59100000)
 rgb(0.20784314)=(0.08320000,0.40570000,0.59590000)
 rgb(0.21176471)=(0.09140000,0.41310000,0.60080000)
 rgb(0.21568627)=(0.09980000,0.42050000,0.60570000)
 rgb(0.21960784)=(0.10860000,0.42800000,0.61070000)
 rgb(0.22352941)=(0.11770000,0.43560000,0.61580000)
 rgb(0.22745098)=(0.12700000,0.44320000,0.62090000)
 rgb(0.23137255)=(0.13670000,0.45090000,0.62610000)
 rgb(0.23529412)=(0.14660000,0.45860000,0.63130000)
 rgb(0.23921569)=(0.15680000,0.46650000,0.63660000)
 rgb(0.24313725)=(0.16720000,0.47430000,0.64190000)
 rgb(0.24705882)=(0.17780000,0.48220000,0.64720000)
 rgb(0.25098039)=(0.18860000,0.49020000,0.65260000)
 rgb(0.25490196)=(0.19960000,0.49820000,0.65800000)
 rgb(0.25882353)=(0.21080000,0.50620000,0.66350000)
 rgb(0.26274510)=(0.22210000,0.51430000,0.66890000)
 rgb(0.26666667)=(0.23360000,0.52230000,0.67440000)
 rgb(0.27058824)=(0.24520000,0.53040000,0.67990000)
 rgb(0.27450980)=(0.25700000,0.53850000,0.68540000)
 rgb(0.27843137)=(0.26890000,0.54660000,0.69090000)
 rgb(0.28235294)=(0.28080000,0.55470000,0.69640000)
 rgb(0.28627451)=(0.29290000,0.56280000,0.70190000)
 rgb(0.29019608)=(0.30500000,0.57090000,0.70740000)
 rgb(0.29411765)=(0.31720000,0.57900000,0.71300000)
 rgb(0.29803922)=(0.32940000,0.58710000,0.71840000)
 rgb(0.30196078)=(0.34170000,0.59510000,0.72390000)
 rgb(0.30588235)=(0.35410000,0.60320000,0.72940000)
 rgb(0.30980392)=(0.36650000,0.61120000,0.73490000)
 rgb(0.31372549)=(0.37890000,0.61920000,0.74030000)
 rgb(0.31764706)=(0.39130000,0.62720000,0.74580000)
 rgb(0.32156863)=(0.40380000,0.63510000,0.75120000)
 rgb(0.32549020)=(0.41620000,0.64300000,0.75660000)
 rgb(0.32941176)=(0.42870000,0.65100000,0.76200000)
 rgb(0.33333333)=(0.44120000,0.65880000,0.76730000)
 rgb(0.33725490)=(0.45370000,0.66670000,0.77270000)
 rgb(0.34117647)=(0.46620000,0.67450000,0.77800000)
 rgb(0.34509804)=(0.47870000,0.68230000,0.78340000)
 rgb(0.34901961)=(0.49120000,0.69010000,0.78870000)
 rgb(0.35294118)=(0.50370000,0.69790000,0.79400000)
 rgb(0.35686275)=(0.51620000,0.70570000,0.79930000)
 rgb(0.36078431)=(0.52870000,0.71340000,0.80450000)
 rgb(0.36470588)=(0.54110000,0.72110000,0.80980000)
 rgb(0.36862745)=(0.55360000,0.72880000,0.81500000)
 rgb(0.37254902)=(0.56610000,0.73640000,0.82020000)
 rgb(0.37647059)=(0.57860000,0.74410000,0.82540000)
 rgb(0.38039216)=(0.59100000,0.75170000,0.83060000)
 rgb(0.38431373)=(0.60350000,0.75930000,0.83580000)
 rgb(0.38823529)=(0.61590000,0.76690000,0.84090000)
 rgb(0.39215686)=(0.62840000,0.77450000,0.84610000)
 rgb(0.39607843)=(0.64080000,0.78200000,0.85110000)
 rgb(0.40000000)=(0.65320000,0.78950000,0.85620000)
 rgb(0.40392157)=(0.66560000,0.79690000,0.86120000)
 rgb(0.40784314)=(0.67810000,0.80440000,0.86620000)
 rgb(0.41176471)=(0.69050000,0.81170000,0.87110000)
 rgb(0.41568627)=(0.70290000,0.81900000,0.87590000)
 rgb(0.41960784)=(0.71530000,0.82630000,0.88060000)
 rgb(0.42352941)=(0.72760000,0.83340000,0.88510000)
 rgb(0.42745098)=(0.74000000,0.84050000,0.88960000)
 rgb(0.43137255)=(0.75240000,0.84740000,0.89380000)
 rgb(0.43529412)=(0.76470000,0.85410000,0.89780000)
 rgb(0.43921569)=(0.77690000,0.86070000,0.90160000)
 rgb(0.44313725)=(0.78910000,0.86700000,0.90500000)
 rgb(0.44705882)=(0.80120000,0.87300000,0.90800000)
 rgb(0.45098039)=(0.81310000,0.87870000,0.91070000)
 rgb(0.45490196)=(0.82490000,0.88410000,0.91280000)
 rgb(0.45882353)=(0.83640000,0.88890000,0.91430000)
 rgb(0.46274510)=(0.84760000,0.89330000,0.91520000)
 rgb(0.46666667)=(0.85850000,0.89710000,0.91540000)
 rgb(0.47058824)=(0.86890000,0.90020000,0.91480000)
 rgb(0.47450980)=(0.87870000,0.90260000,0.91340000)
 rgb(0.47843137)=(0.88800000,0.90430000,0.91120000)
 rgb(0.48235294)=(0.89650000,0.90520000,0.90800000)
 rgb(0.48627451)=(0.90420000,0.90520000,0.90400000)
 rgb(0.49019608)=(0.91120000,0.90440000,0.89910000)
 rgb(0.49411765)=(0.91720000,0.90280000,0.89340000)
 rgb(0.49803922)=(0.92230000,0.90040000,0.88690000)
 rgb(0.50196078)=(0.92650000,0.89720000,0.87970000)
 rgb(0.50588235)=(0.92980000,0.89330000,0.87180000)
 rgb(0.50980392)=(0.93220000,0.88870000,0.86340000)
 rgb(0.51372549)=(0.93390000,0.88360000,0.85450000)
 rgb(0.51764706)=(0.93480000,0.87790000,0.84520000)
 rgb(0.52156863)=(0.93500000,0.87180000,0.83550000)
 rgb(0.52549020)=(0.93460000,0.86530000,0.82560000)
 rgb(0.52941176)=(0.93380000,0.85850000,0.81540000)
 rgb(0.53333333)=(0.93240000,0.85150000,0.80510000)
 rgb(0.53725490)=(0.93070000,0.84420000,0.79470000)
 rgb(0.54117647)=(0.92860000,0.83680000,0.78420000)
 rgb(0.54509804)=(0.92630000,0.82920000,0.77360000)
 rgb(0.54901961)=(0.92380000,0.82150000,0.76300000)
 rgb(0.55294118)=(0.92100000,0.81380000,0.75230000)
 rgb(0.55686275)=(0.91810000,0.80600000,0.74170000)
 rgb(0.56078431)=(0.91520000,0.79820000,0.73100000)
 rgb(0.56470588)=(0.91210000,0.79030000,0.72040000)
 rgb(0.56862745)=(0.90890000,0.78240000,0.70980000)
 rgb(0.57254902)=(0.90570000,0.77450000,0.69920000)
 rgb(0.57647059)=(0.90250000,0.76670000,0.68860000)
 rgb(0.58039216)=(0.89920000,0.75880000,0.67810000)
 rgb(0.58431373)=(0.89600000,0.75100000,0.66760000)
 rgb(0.58823529)=(0.89270000,0.74310000,0.65710000)
 rgb(0.59215686)=(0.88940000,0.73530000,0.64670000)
 rgb(0.59607843)=(0.88610000,0.72760000,0.63630000)
 rgb(0.60000000)=(0.88280000,0.71980000,0.62590000)
 rgb(0.60392157)=(0.87960000,0.71210000,0.61560000)
 rgb(0.60784314)=(0.87630000,0.70440000,0.60540000)
 rgb(0.61176471)=(0.87300000,0.69680000,0.59510000)
 rgb(0.61568627)=(0.86980000,0.68910000,0.58500000)
 rgb(0.61960784)=(0.86660000,0.68150000,0.57480000)
 rgb(0.62352941)=(0.86330000,0.67400000,0.56470000)
 rgb(0.62745098)=(0.86010000,0.66650000,0.55470000)
 rgb(0.63137255)=(0.85690000,0.65900000,0.54470000)
 rgb(0.63529412)=(0.85370000,0.65150000,0.53480000)
 rgb(0.63921569)=(0.85060000,0.64410000,0.52480000)
 rgb(0.64313725)=(0.84740000,0.63670000,0.51500000)
 rgb(0.64705882)=(0.84430000,0.62930000,0.50510000)
 rgb(0.65098039)=(0.84110000,0.62200000,0.49540000)
 rgb(0.65490196)=(0.83800000,0.61470000,0.48560000)
 rgb(0.65882353)=(0.83490000,0.60740000,0.47590000)
 rgb(0.66274510)=(0.83180000,0.60010000,0.46630000)
 rgb(0.66666667)=(0.82870000,0.59290000,0.45670000)
 rgb(0.67058824)=(0.82560000,0.58580000,0.44710000)
 rgb(0.67450980)=(0.82260000,0.57860000,0.43760000)
 rgb(0.67843137)=(0.81950000,0.57150000,0.42810000)
 rgb(0.68235294)=(0.81650000,0.56440000,0.41870000)
 rgb(0.68627451)=(0.81350000,0.55730000,0.40930000)
 rgb(0.69019608)=(0.81040000,0.55030000,0.39990000)
 rgb(0.69411765)=(0.80740000,0.54330000,0.39060000)
 rgb(0.69803922)=(0.80440000,0.53630000,0.38130000)
 rgb(0.70196078)=(0.80150000,0.52930000,0.37200000)
 rgb(0.70588235)=(0.79850000,0.52240000,0.36280000)
 rgb(0.70980392)=(0.79550000,0.51550000,0.35370000)
 rgb(0.71372549)=(0.79250000,0.50860000,0.34450000)
 rgb(0.71764706)=(0.78960000,0.50170000,0.33540000)
 rgb(0.72156863)=(0.78660000,0.49480000,0.32630000)
 rgb(0.72549020)=(0.78370000,0.48800000,0.31730000)
 rgb(0.72941176)=(0.78070000,0.48110000,0.30830000)
 rgb(0.73333333)=(0.77770000,0.47430000,0.29930000)
 rgb(0.73725490)=(0.77480000,0.46750000,0.29040000)
 rgb(0.74117647)=(0.77180000,0.46060000,0.28140000)
 rgb(0.74509804)=(0.76880000,0.45380000,0.27250000)
 rgb(0.74901961)=(0.76580000,0.44690000,0.26360000)
 rgb(0.75294118)=(0.76270000,0.44010000,0.25480000)
 rgb(0.75686275)=(0.75960000,0.43310000,0.24590000)
 rgb(0.76078431)=(0.75650000,0.42620000,0.23700000)
 rgb(0.76470588)=(0.75330000,0.41920000,0.22820000)
 rgb(0.76862745)=(0.75010000,0.41220000,0.21930000)
 rgb(0.77254902)=(0.74670000,0.40500000,0.21050000)
 rgb(0.77647059)=(0.74320000,0.39780000,0.20160000)
 rgb(0.78039216)=(0.73970000,0.39050000,0.19270000)
 rgb(0.78431373)=(0.73590000,0.38310000,0.18390000)
 rgb(0.78823529)=(0.73200000,0.37550000,0.17500000)
 rgb(0.79215686)=(0.72790000,0.36770000,0.16600000)
 rgb(0.79607843)=(0.72350000,0.35990000,0.15710000)
 rgb(0.80000000)=(0.71890000,0.35180000,0.14820000)
 rgb(0.80392157)=(0.71400000,0.34350000,0.13930000)
 rgb(0.80784314)=(0.70880000,0.33500000,0.13050000)
 rgb(0.81176471)=(0.70330000,0.32640000,0.12150000)
 rgb(0.81568627)=(0.69740000,0.31750000,0.11280000)
 rgb(0.81960784)=(0.69120000,0.30850000,0.10410000)
 rgb(0.82352941)=(0.68470000,0.29930000,0.09560000)
 rgb(0.82745098)=(0.67770000,0.28990000,0.08740000)
 rgb(0.83137255)=(0.67050000,0.28050000,0.07920000)
 rgb(0.83529412)=(0.66290000,0.27100000,0.07150000)
 rgb(0.83921569)=(0.65500000,0.26150000,0.06410000)
 rgb(0.84313725)=(0.64700000,0.25210000,0.05710000)
 rgb(0.84705882)=(0.63870000,0.24270000,0.05060000)
 rgb(0.85098039)=(0.63030000,0.23350000,0.04480000)
 rgb(0.85490196)=(0.62170000,0.22440000,0.03940000)
 rgb(0.85882353)=(0.61310000,0.21570000,0.03480000)
 rgb(0.86274510)=(0.60450000,0.20710000,0.03110000)
 rgb(0.86666667)=(0.59590000,0.19870000,0.02820000)
 rgb(0.87058824)=(0.58740000,0.19070000,0.02600000)
 rgb(0.87450980)=(0.57890000,0.18290000,0.02440000)
 rgb(0.87843137)=(0.57050000,0.17540000,0.02330000)
 rgb(0.88235294)=(0.56230000,0.16820000,0.02250000)
 rgb(0.88627451)=(0.55410000,0.16120000,0.02210000)
 rgb(0.89019608)=(0.54600000,0.15440000,0.02190000)
 rgb(0.89411765)=(0.53800000,0.14790000,0.02170000)
 rgb(0.89803922)=(0.53020000,0.14150000,0.02170000)
 rgb(0.90196078)=(0.52240000,0.13530000,0.02180000)
 rgb(0.90588235)=(0.51480000,0.12920000,0.02200000)
 rgb(0.90980392)=(0.50720000,0.12330000,0.02220000)
 rgb(0.91372549)=(0.49970000,0.11750000,0.02250000)
 rgb(0.91764706)=(0.49230000,0.11180000,0.02280000)
 rgb(0.92156863)=(0.48500000,0.10620000,0.02310000)
 rgb(0.92549020)=(0.47780000,0.10060000,0.02350000)
 rgb(0.92941176)=(0.47060000,0.09520000,0.02390000)
 rgb(0.93333333)=(0.46350000,0.08970000,0.02430000)
 rgb(0.93725490)=(0.45650000,0.08430000,0.02480000)
 rgb(0.94117647)=(0.44950000,0.07870000,0.02520000)
 rgb(0.94509804)=(0.44260000,0.07340000,0.02560000)
 rgb(0.94901961)=(0.43570000,0.06790000,0.02610000)
 rgb(0.95294118)=(0.42890000,0.06240000,0.02650000)
 rgb(0.95686275)=(0.42210000,0.05680000,0.02700000)
 rgb(0.96078431)=(0.41540000,0.05110000,0.02740000)
 rgb(0.96470588)=(0.40880000,0.04540000,0.02780000)
 rgb(0.96862745)=(0.40210000,0.03940000,0.02820000)
 rgb(0.97254902)=(0.39560000,0.03340000,0.02860000)
 rgb(0.97647059)=(0.38900000,0.02780000,0.02890000)
 rgb(0.98039216)=(0.38250000,0.02260000,0.02930000)
 rgb(0.98431373)=(0.37600000,0.01760000,0.02960000)
 rgb(0.98823529)=(0.36960000,0.01290000,0.02990000)
 rgb(0.99215686)=(0.36320000,0.00820000,0.03010000)
 rgb(0.99607843)=(0.35680000,0.00400000,0.03030000)
 rgb(1.00000000)=(0.35040000,0.00010000,0.03050000)},
 }
 \pgfplotsset{
 colormap={plots2}{rgb(0.00000000)=(0.00130000,0.06980000,0.37950000)
 rgb(0.00392157)=(0.00240000,0.07650000,0.38350000)
 rgb(0.00784314)=(0.00330000,0.08310000,0.38750000)
 rgb(0.01176471)=(0.00410000,0.08960000,0.39150000)
 rgb(0.01568627)=(0.00490000,0.09590000,0.39550000)
 rgb(0.01960784)=(0.00560000,0.10230000,0.39940000)
 rgb(0.02352941)=(0.00620000,0.10850000,0.40340000)
 rgb(0.02745098)=(0.00670000,0.11470000,0.40730000)
 rgb(0.03137255)=(0.00710000,0.12080000,0.41130000)
 rgb(0.03529412)=(0.00750000,0.12700000,0.41520000)
 rgb(0.03921569)=(0.00780000,0.13310000,0.41920000)
 rgb(0.04313725)=(0.00810000,0.13910000,0.42310000)
 rgb(0.04705882)=(0.00840000,0.14520000,0.42700000)
 rgb(0.05098039)=(0.00860000,0.15110000,0.43090000)
 rgb(0.05490196)=(0.00880000,0.15710000,0.43480000)
 rgb(0.05882353)=(0.00890000,0.16320000,0.43870000)
 rgb(0.06274510)=(0.00910000,0.16910000,0.44260000)
 rgb(0.06666667)=(0.00920000,0.17510000,0.44650000)
 rgb(0.07058824)=(0.00930000,0.18110000,0.45030000)
 rgb(0.07450980)=(0.00940000,0.18710000,0.45420000)
 rgb(0.07843137)=(0.00940000,0.19300000,0.45810000)
 rgb(0.08235294)=(0.00950000,0.19900000,0.46200000)
 rgb(0.08627451)=(0.00960000,0.20500000,0.46580000)
 rgb(0.09019608)=(0.00960000,0.21100000,0.46970000)
 rgb(0.09411765)=(0.00970000,0.21700000,0.47360000)
 rgb(0.09803922)=(0.00970000,0.22310000,0.47750000)
 rgb(0.10196078)=(0.00980000,0.22910000,0.48140000)
 rgb(0.10588235)=(0.00990000,0.23520000,0.48520000)
 rgb(0.10980392)=(0.01000000,0.24130000,0.48920000)
 rgb(0.11372549)=(0.01010000,0.24740000,0.49310000)
 rgb(0.11764706)=(0.01030000,0.25350000,0.49700000)
 rgb(0.12156863)=(0.01050000,0.25970000,0.50100000)
 rgb(0.12549020)=(0.01080000,0.26590000,0.50490000)
 rgb(0.12941176)=(0.01120000,0.27200000,0.50890000)
 rgb(0.13333333)=(0.01170000,0.27830000,0.51290000)
 rgb(0.13725490)=(0.01230000,0.28460000,0.51700000)
 rgb(0.14117647)=(0.01290000,0.29090000,0.52100000)
 rgb(0.14509804)=(0.01380000,0.29720000,0.52510000)
 rgb(0.14901961)=(0.01480000,0.30360000,0.52920000)
 rgb(0.15294118)=(0.01610000,0.31000000,0.53330000)
 rgb(0.15686275)=(0.01770000,0.31650000,0.53750000)
 rgb(0.16078431)=(0.01960000,0.32300000,0.54170000)
 rgb(0.16470588)=(0.02190000,0.32960000,0.54590000)
 rgb(0.16862745)=(0.02470000,0.33610000,0.55020000)
 rgb(0.17254902)=(0.02800000,0.34280000,0.55450000)
 rgb(0.17647059)=(0.03200000,0.34950000,0.55890000)
 rgb(0.18039216)=(0.03680000,0.35630000,0.56330000)
 rgb(0.18431373)=(0.04220000,0.36320000,0.56780000)
 rgb(0.18823529)=(0.04800000,0.37010000,0.57230000)
 rgb(0.19215686)=(0.05430000,0.37710000,0.57690000)
 rgb(0.19607843)=(0.06100000,0.38410000,0.58160000)
 rgb(0.20000000)=(0.06810000,0.39130000,0.58630000)
 rgb(0.20392157)=(0.07550000,0.39850000,0.59100000)
 rgb(0.20784314)=(0.08320000,0.40570000,0.59590000)
 rgb(0.21176471)=(0.09140000,0.41310000,0.60080000)
 rgb(0.21568627)=(0.09980000,0.42050000,0.60570000)
 rgb(0.21960784)=(0.10860000,0.42800000,0.61070000)
 rgb(0.22352941)=(0.11770000,0.43560000,0.61580000)
 rgb(0.22745098)=(0.12700000,0.44320000,0.62090000)
 rgb(0.23137255)=(0.13670000,0.45090000,0.62610000)
 rgb(0.23529412)=(0.14660000,0.45860000,0.63130000)
 rgb(0.23921569)=(0.15680000,0.46650000,0.63660000)
 rgb(0.24313725)=(0.16720000,0.47430000,0.64190000)
 rgb(0.24705882)=(0.17780000,0.48220000,0.64720000)
 rgb(0.25098039)=(0.18860000,0.49020000,0.65260000)
 rgb(0.25490196)=(0.19960000,0.49820000,0.65800000)
 rgb(0.25882353)=(0.21080000,0.50620000,0.66350000)
 rgb(0.26274510)=(0.22210000,0.51430000,0.66890000)
 rgb(0.26666667)=(0.23360000,0.52230000,0.67440000)
 rgb(0.27058824)=(0.24520000,0.53040000,0.67990000)
 rgb(0.27450980)=(0.25700000,0.53850000,0.68540000)
 rgb(0.27843137)=(0.26890000,0.54660000,0.69090000)
 rgb(0.28235294)=(0.28080000,0.55470000,0.69640000)
 rgb(0.28627451)=(0.29290000,0.56280000,0.70190000)
 rgb(0.29019608)=(0.30500000,0.57090000,0.70740000)
 rgb(0.29411765)=(0.31720000,0.57900000,0.71300000)
 rgb(0.29803922)=(0.32940000,0.58710000,0.71840000)
 rgb(0.30196078)=(0.34170000,0.59510000,0.72390000)
 rgb(0.30588235)=(0.35410000,0.60320000,0.72940000)
 rgb(0.30980392)=(0.36650000,0.61120000,0.73490000)
 rgb(0.31372549)=(0.37890000,0.61920000,0.74030000)
 rgb(0.31764706)=(0.39130000,0.62720000,0.74580000)
 rgb(0.32156863)=(0.40380000,0.63510000,0.75120000)
 rgb(0.32549020)=(0.41620000,0.64300000,0.75660000)
 rgb(0.32941176)=(0.42870000,0.65100000,0.76200000)
 rgb(0.33333333)=(0.44120000,0.65880000,0.76730000)
 rgb(0.33725490)=(0.45370000,0.66670000,0.77270000)
 rgb(0.34117647)=(0.46620000,0.67450000,0.77800000)
 rgb(0.34509804)=(0.47870000,0.68230000,0.78340000)
 rgb(0.34901961)=(0.49120000,0.69010000,0.78870000)
 rgb(0.35294118)=(0.50370000,0.69790000,0.79400000)
 rgb(0.35686275)=(0.51620000,0.70570000,0.79930000)
 rgb(0.36078431)=(0.52870000,0.71340000,0.80450000)
 rgb(0.36470588)=(0.54110000,0.72110000,0.80980000)
 rgb(0.36862745)=(0.55360000,0.72880000,0.81500000)
 rgb(0.37254902)=(0.56610000,0.73640000,0.82020000)
 rgb(0.37647059)=(0.57860000,0.74410000,0.82540000)
 rgb(0.38039216)=(0.59100000,0.75170000,0.83060000)
 rgb(0.38431373)=(0.60350000,0.75930000,0.83580000)
 rgb(0.38823529)=(0.61590000,0.76690000,0.84090000)
 rgb(0.39215686)=(0.62840000,0.77450000,0.84610000)
 rgb(0.39607843)=(0.64080000,0.78200000,0.85110000)
 rgb(0.40000000)=(0.65320000,0.78950000,0.85620000)
 rgb(0.40392157)=(0.66560000,0.79690000,0.86120000)
 rgb(0.40784314)=(0.67810000,0.80440000,0.86620000)
 rgb(0.41176471)=(0.69050000,0.81170000,0.87110000)
 rgb(0.41568627)=(0.70290000,0.81900000,0.87590000)
 rgb(0.41960784)=(0.71530000,0.82630000,0.88060000)
 rgb(0.42352941)=(0.72760000,0.83340000,0.88510000)
 rgb(0.42745098)=(0.74000000,0.84050000,0.88960000)
 rgb(0.43137255)=(0.75240000,0.84740000,0.89380000)
 rgb(0.43529412)=(0.76470000,0.85410000,0.89780000)
 rgb(0.43921569)=(0.77690000,0.86070000,0.90160000)
 rgb(0.44313725)=(0.78910000,0.86700000,0.90500000)
 rgb(0.44705882)=(0.80120000,0.87300000,0.90800000)
 rgb(0.45098039)=(0.81310000,0.87870000,0.91070000)
 rgb(0.45490196)=(0.82490000,0.88410000,0.91280000)
 rgb(0.45882353)=(0.83640000,0.88890000,0.91430000)
 rgb(0.46274510)=(0.84760000,0.89330000,0.91520000)
 rgb(0.46666667)=(0.85850000,0.89710000,0.91540000)
 rgb(0.47058824)=(0.86890000,0.90020000,0.91480000)
 rgb(0.47450980)=(0.87870000,0.90260000,0.91340000)
 rgb(0.47843137)=(0.88800000,0.90430000,0.91120000)
 rgb(0.48235294)=(0.89650000,0.90520000,0.90800000)
 rgb(0.48627451)=(0.90420000,0.90520000,0.90400000)
 rgb(0.49019608)=(0.91120000,0.90440000,0.89910000)
 rgb(0.49411765)=(0.91720000,0.90280000,0.89340000)
 rgb(0.49803922)=(0.92230000,0.90040000,0.88690000)
 rgb(0.50196078)=(0.92650000,0.89720000,0.87970000)
 rgb(0.50588235)=(0.92980000,0.89330000,0.87180000)
 rgb(0.50980392)=(0.93220000,0.88870000,0.86340000)
 rgb(0.51372549)=(0.93390000,0.88360000,0.85450000)
 rgb(0.51764706)=(0.93480000,0.87790000,0.84520000)
 rgb(0.52156863)=(0.93500000,0.87180000,0.83550000)
 rgb(0.52549020)=(0.93460000,0.86530000,0.82560000)
 rgb(0.52941176)=(0.93380000,0.85850000,0.81540000)
 rgb(0.53333333)=(0.93240000,0.85150000,0.80510000)
 rgb(0.53725490)=(0.93070000,0.84420000,0.79470000)
 rgb(0.54117647)=(0.92860000,0.83680000,0.78420000)
 rgb(0.54509804)=(0.92630000,0.82920000,0.77360000)
 rgb(0.54901961)=(0.92380000,0.82150000,0.76300000)
 rgb(0.55294118)=(0.92100000,0.81380000,0.75230000)
 rgb(0.55686275)=(0.91810000,0.80600000,0.74170000)
 rgb(0.56078431)=(0.91520000,0.79820000,0.73100000)
 rgb(0.56470588)=(0.91210000,0.79030000,0.72040000)
 rgb(0.56862745)=(0.90890000,0.78240000,0.70980000)
 rgb(0.57254902)=(0.90570000,0.77450000,0.69920000)
 rgb(0.57647059)=(0.90250000,0.76670000,0.68860000)
 rgb(0.58039216)=(0.89920000,0.75880000,0.67810000)
 rgb(0.58431373)=(0.89600000,0.75100000,0.66760000)
 rgb(0.58823529)=(0.89270000,0.74310000,0.65710000)
 rgb(0.59215686)=(0.88940000,0.73530000,0.64670000)
 rgb(0.59607843)=(0.88610000,0.72760000,0.63630000)
 rgb(0.60000000)=(0.88280000,0.71980000,0.62590000)
 rgb(0.60392157)=(0.87960000,0.71210000,0.61560000)
 rgb(0.60784314)=(0.87630000,0.70440000,0.60540000)
 rgb(0.61176471)=(0.87300000,0.69680000,0.59510000)
 rgb(0.61568627)=(0.86980000,0.68910000,0.58500000)
 rgb(0.61960784)=(0.86660000,0.68150000,0.57480000)
 rgb(0.62352941)=(0.86330000,0.67400000,0.56470000)
 rgb(0.62745098)=(0.86010000,0.66650000,0.55470000)
 rgb(0.63137255)=(0.85690000,0.65900000,0.54470000)
 rgb(0.63529412)=(0.85370000,0.65150000,0.53480000)
 rgb(0.63921569)=(0.85060000,0.64410000,0.52480000)
 rgb(0.64313725)=(0.84740000,0.63670000,0.51500000)
 rgb(0.64705882)=(0.84430000,0.62930000,0.50510000)
 rgb(0.65098039)=(0.84110000,0.62200000,0.49540000)
 rgb(0.65490196)=(0.83800000,0.61470000,0.48560000)
 rgb(0.65882353)=(0.83490000,0.60740000,0.47590000)
 rgb(0.66274510)=(0.83180000,0.60010000,0.46630000)
 rgb(0.66666667)=(0.82870000,0.59290000,0.45670000)
 rgb(0.67058824)=(0.82560000,0.58580000,0.44710000)
 rgb(0.67450980)=(0.82260000,0.57860000,0.43760000)
 rgb(0.67843137)=(0.81950000,0.57150000,0.42810000)
 rgb(0.68235294)=(0.81650000,0.56440000,0.41870000)
 rgb(0.68627451)=(0.81350000,0.55730000,0.40930000)
 rgb(0.69019608)=(0.81040000,0.55030000,0.39990000)
 rgb(0.69411765)=(0.80740000,0.54330000,0.39060000)
 rgb(0.69803922)=(0.80440000,0.53630000,0.38130000)
 rgb(0.70196078)=(0.80150000,0.52930000,0.37200000)
 rgb(0.70588235)=(0.79850000,0.52240000,0.36280000)
 rgb(0.70980392)=(0.79550000,0.51550000,0.35370000)
 rgb(0.71372549)=(0.79250000,0.50860000,0.34450000)
 rgb(0.71764706)=(0.78960000,0.50170000,0.33540000)
 rgb(0.72156863)=(0.78660000,0.49480000,0.32630000)
 rgb(0.72549020)=(0.78370000,0.48800000,0.31730000)
 rgb(0.72941176)=(0.78070000,0.48110000,0.30830000)
 rgb(0.73333333)=(0.77770000,0.47430000,0.29930000)
 rgb(0.73725490)=(0.77480000,0.46750000,0.29040000)
 rgb(0.74117647)=(0.77180000,0.46060000,0.28140000)
 rgb(0.74509804)=(0.76880000,0.45380000,0.27250000)
 rgb(0.74901961)=(0.76580000,0.44690000,0.26360000)
 rgb(0.75294118)=(0.76270000,0.44010000,0.25480000)
 rgb(0.75686275)=(0.75960000,0.43310000,0.24590000)
 rgb(0.76078431)=(0.75650000,0.42620000,0.23700000)
 rgb(0.76470588)=(0.75330000,0.41920000,0.22820000)
 rgb(0.76862745)=(0.75010000,0.41220000,0.21930000)
 rgb(0.77254902)=(0.74670000,0.40500000,0.21050000)
 rgb(0.77647059)=(0.74320000,0.39780000,0.20160000)
 rgb(0.78039216)=(0.73970000,0.39050000,0.19270000)
 rgb(0.78431373)=(0.73590000,0.38310000,0.18390000)
 rgb(0.78823529)=(0.73200000,0.37550000,0.17500000)
 rgb(0.79215686)=(0.72790000,0.36770000,0.16600000)
 rgb(0.79607843)=(0.72350000,0.35990000,0.15710000)
 rgb(0.80000000)=(0.71890000,0.35180000,0.14820000)
 rgb(0.80392157)=(0.71400000,0.34350000,0.13930000)
 rgb(0.80784314)=(0.70880000,0.33500000,0.13050000)
 rgb(0.81176471)=(0.70330000,0.32640000,0.12150000)
 rgb(0.81568627)=(0.69740000,0.31750000,0.11280000)
 rgb(0.81960784)=(0.69120000,0.30850000,0.10410000)
 rgb(0.82352941)=(0.68470000,0.29930000,0.09560000)
 rgb(0.82745098)=(0.67770000,0.28990000,0.08740000)
 rgb(0.83137255)=(0.67050000,0.28050000,0.07920000)
 rgb(0.83529412)=(0.66290000,0.27100000,0.07150000)
 rgb(0.83921569)=(0.65500000,0.26150000,0.06410000)
 rgb(0.84313725)=(0.64700000,0.25210000,0.05710000)
 rgb(0.84705882)=(0.63870000,0.24270000,0.05060000)
 rgb(0.85098039)=(0.63030000,0.23350000,0.04480000)
 rgb(0.85490196)=(0.62170000,0.22440000,0.03940000)
 rgb(0.85882353)=(0.61310000,0.21570000,0.03480000)
 rgb(0.86274510)=(0.60450000,0.20710000,0.03110000)
 rgb(0.86666667)=(0.59590000,0.19870000,0.02820000)
 rgb(0.87058824)=(0.58740000,0.19070000,0.02600000)
 rgb(0.87450980)=(0.57890000,0.18290000,0.02440000)
 rgb(0.87843137)=(0.57050000,0.17540000,0.02330000)
 rgb(0.88235294)=(0.56230000,0.16820000,0.02250000)
 rgb(0.88627451)=(0.55410000,0.16120000,0.02210000)
 rgb(0.89019608)=(0.54600000,0.15440000,0.02190000)
 rgb(0.89411765)=(0.53800000,0.14790000,0.02170000)
 rgb(0.89803922)=(0.53020000,0.14150000,0.02170000)
 rgb(0.90196078)=(0.52240000,0.13530000,0.02180000)
 rgb(0.90588235)=(0.51480000,0.12920000,0.02200000)
 rgb(0.90980392)=(0.50720000,0.12330000,0.02220000)
 rgb(0.91372549)=(0.49970000,0.11750000,0.02250000)
 rgb(0.91764706)=(0.49230000,0.11180000,0.02280000)
 rgb(0.92156863)=(0.48500000,0.10620000,0.02310000)
 rgb(0.92549020)=(0.47780000,0.10060000,0.02350000)
 rgb(0.92941176)=(0.47060000,0.09520000,0.02390000)
 rgb(0.93333333)=(0.46350000,0.08970000,0.02430000)
 rgb(0.93725490)=(0.45650000,0.08430000,0.02480000)
 rgb(0.94117647)=(0.44950000,0.07870000,0.02520000)
 rgb(0.94509804)=(0.44260000,0.07340000,0.02560000)
 rgb(0.94901961)=(0.43570000,0.06790000,0.02610000)
 rgb(0.95294118)=(0.42890000,0.06240000,0.02650000)
 rgb(0.95686275)=(0.42210000,0.05680000,0.02700000)
 rgb(0.96078431)=(0.41540000,0.05110000,0.02740000)
 rgb(0.96470588)=(0.40880000,0.04540000,0.02780000)
 rgb(0.96862745)=(0.40210000,0.03940000,0.02820000)
 rgb(0.97254902)=(0.39560000,0.03340000,0.02860000)
 rgb(0.97647059)=(0.38900000,0.02780000,0.02890000)
 rgb(0.98039216)=(0.38250000,0.02260000,0.02930000)
 rgb(0.98431373)=(0.37600000,0.01760000,0.02960000)
 rgb(0.98823529)=(0.36960000,0.01290000,0.02990000)
 rgb(0.99215686)=(0.36320000,0.00820000,0.03010000)
 rgb(0.99607843)=(0.35680000,0.00400000,0.03030000)
 rgb(1.00000000)=(0.35040000,0.00010000,0.03050000)},
 }
 \pgfplotsset{
 colormap={tableaucolorblind}{rgb(0.00000000)=(0.06666667,0.43921569,0.66666667)
 rgb(0.11111111)=(0.98823529,0.49019608,0.04313725)
 rgb(0.22222222)=(0.63921569,0.67450980,0.72549020)
 rgb(0.33333333)=(0.34117647,0.37647059,0.42352941)
 rgb(0.44444444)=(0.37254902,0.63529412,0.80784314)
 rgb(0.55555556)=(0.78431373,0.32156863,0.00000000)
 rgb(0.66666667)=(0.48235294,0.51764706,0.56078431)
 rgb(0.77777778)=(0.63921569,0.80000000,0.91372549)
 rgb(0.88888889)=(1.00000000,0.73725490,0.47450980)
 rgb(1.00000000)=(0.78431373,0.81568627,0.85098039)},
 }
\pgfplotsset{ layers/my layer set/.define layer set={
        background, backishground, main, foreground
    }{
    },
    set layers=my layer set,
}
\definecolor{DarkGreen}{rgb}{0.1,0.5,0.1}
\definecolor{DarkRed}{rgb}{0.5,0.1,0.1}
\definecolor{DarkBlue}{rgb}{0.1,0.1,0.5}
\definecolor{bleudefrance}{rgb}{0.19, 0.55, 0.91}
\newtheorem{theorem}{Theorem}
\newtheorem{lemma}{Lemma}
\newtheorem{definition}{Definition}
\newtheorem{remark}{Remark}
\newtheorem{construction}{Construction}
\newcommand{\Fp}{\mathbb{F}_p}
\newcommand{\as}{\bm{\alpha}^{(s)}}
\newcommand{\ap}{\bm{\alpha}^{(p)}}
\newcommand{\bs}{\bm{\beta}^{(s)}}
\newcommand{\bp}{\bm{\beta}^{(p)}}
\newcommand{\ase}{\bm{\alpha}^{(s,E)}}
\newcommand{\ape}{\bm{\alpha}^{(p,E)}}
\newcommand{\bse}{\bm{\beta}^{(s,E)}}
\newcommand{\bpe}{\bm{\beta}^{(p,E)}}
\newcommand{\eas}{\alpha^{(s)}}
\newcommand{\eap}{\alpha^{(p)}}
\newcommand{\ebs}{\beta^{(s)}}
\newcommand{\ebp}{\beta^{(p)}}
\newcommand{\ca}{\zeta}
\newcommand{\kb}{\bar{k}}
\newcommand{\lb}{\bar{\ell}}
\newcommand{\kbp}{\bar{k^\prime}}
\newcommand{\lbp}{\bar{\ell^\prime}}
\newcommand{\Zq}{\ensuremath{\mathbb{Z}_q}\xspace}
\newcommand{\asdtd}{\bm{\alpha}^{(s,D\to D)}}
\newcommand{\apdtd}{\bm{\alpha}^{(p,D \to D)}}
\newcommand{\bsdtd}{\bm{\beta}^{(s,D \to D)}}
\newcommand{\bpdtd}{\bm{\beta}^{(p,D \to D)}}
\newcommand{\asctc}{\bm{\alpha}^{(s,C\to C)}}
\newcommand{\apctc}{\bm{\alpha}^{(p,C\to C)}}
\newcommand{\bsctc}{\bm{\beta}^{(s, C\to C)}}
\newcommand{\bpctc}{\bm{\beta}^{(p, C\to C)}}
\newcommand{\asdtc}{\bm{\alpha}^{(s,D\to C)}}
\newcommand{\apdtc}{\bm{\alpha}^{(p,D\to C)}}
\newcommand{\bsdtc}{\bm{\beta}^{(s, D\to C)}}
\newcommand{\bpdtc}{\bm{\beta}^{(p, D\to C)}}
\newcommand{\gap}{\mathrm{gap}\xspace}
\newcommand{\gasp}{GASP\xspace}
\newcommand{\gaspr}{GASP\textsubscript{r}\xspace}
\newcommand{\gaspsmall}{GASP\textsubscript{small}\xspace}
\newcommand{\gaspbig}{GASP\textsubscript{big}\xspace}
\newcommand{\dogrs}{DOG\textsubscript{rs}\xspace}
\newcommand{\catx}{CAT\textsubscript{x}\xspace}
\newcommand{\ggasp}{GGASP\xspace}
\newcommand{\ggaspr}{GGASP\textsubscript{r}\xspace}
\newcommand{\dog}{DOG\xspace}
\newcommand{\bfA}{\ensuremath{\bm{A}}\xspace}
\newcommand{\bfB}{\ensuremath{\bm{B}}\xspace}
\newcommand{\bfR}{\ensuremath{\bm{R}}\xspace}
\newcommand{\bfS}{\ensuremath{\bm{S}}\xspace}
\newcommand{\bfF}{\ensuremath{\bm{F}}\xspace}
\newcommand{\bfG}{\ensuremath{\bm{G}}\xspace}
\newcommand{\bfH}{\ensuremath{\bm{H}}\xspace}
\newcommand{\bfV}{\ensuremath{\bm{V}}\xspace}
\newcommand{\cA}{\ensuremath{\mathcal{A}}\xspace}
\newcommand{\cB}{\ensuremath{\mathcal{B}}\xspace}
\newcommand{\cC}{\ensuremath{\mathcal{C}}\xspace}
\newcommand{\cD}{\ensuremath{\mathcal{D}}\xspace}
\newcommand{\cX}{\ensuremath{\mathcal{X}}\xspace}
\newcommand{\cU}{\ensuremath{\mathcal{U}}\xspace}
\newcommand{\cAo}{\ensuremath{\mathcal{A}^\mathrm{OPP}}\xspace}
\newcommand{\cCo}{\ensuremath{\mathcal{C}^\mathrm{OPP}}\xspace}
\newcommand{\cUo}{\ensuremath{\mathcal{U}^\mathrm{OPP}}\xspace}
\newcommand{\TRo}{\ensuremath{\mathcal{TR}^\mathrm{OPP}}\xspace}
\newcommand{\BRo}{\ensuremath{\mathcal{BR}^\mathrm{OPP}}\xspace}
\newcommand{\TLt}{\ensuremath{\widetilde{\mathcal{TL}}}\xspace}
\newcommand{\TL}{\ensuremath{\mathcal{TL}}\xspace}
\newcommand{\TR}{\ensuremath{\mathcal{TR}}\xspace}
\newcommand{\BL}{\ensuremath{\mathcal{BL}}\xspace}
\newcommand{\BR}{\ensuremath{\mathcal{BR}}\xspace}
\newcommand{\zbl}{\ensuremath{z_{\mathrm{BL}}}\xspace}
\newcommand{\ztr}{\ensuremath{z_{\mathrm{TR}}}\xspace}
\newcommand{\zbr}{\ensuremath{z_{\mathrm{BR}}}\xspace}
\newcommand{\tmpvar}{\ensuremath{\chi}\xspace}
\newcommand{\tmpvarbl}{\ensuremath{\xi}\xspace}
\newcommand{\tmpvarbr}{\ensuremath{\eta}\xspace}
\newcommand{\setint}[2]{\{#1 \!:\! #2\}}
\newcommand{\vecint}[2]{(#1 \!:\! #2)}
\newcommand{\define}{\ensuremath{\triangleq}}
\newcommand{\defeq}{\ensuremath{\triangleq}}
\newcommand{\st}{\,\vert\,}
\newcommand{\vect}{\operatorname{vec}}
\begin{document}

\title{On the Extension of Private Distributed Matrix Multiplication Schemes to the Grid Partition\thanks{This work is supported by the German Research Foundation (DFG) under Grant No.\ BI~2492/1-1.}}

\author{
   \IEEEauthorblockN{Christoph Hofmeister\textsuperscript{1}, Razane Tajeddine\textsuperscript{2}, Antonia Wachter-Zeh\textsuperscript{1}, and Rawad Bitar\textsuperscript{1}}\\
  \IEEEauthorblockA{\textsuperscript{1}Technical University of Munich (TUM), \{christoph.hofmeister, antonia.wachter-zeh, rawad.bitar\}@tum.de}
  \IEEEauthorblockA{\textsuperscript{2}American University of Beirut (AUB), razane.tajeddine@aub.edu.lb}
\ifarxiv \else \thanks{A preprint of this paper with supplementary material is available online~\cite{arxivversion}.} \fi
\vspace{-3ex}
}

\ifarxiv \else  \pagenumbering{gobble} \fi
\maketitle

\begin{abstract} \ifarxiv \else THIS PAPER IS ELIGIBLE FOR THE STUDENT PAPER AWARD. \fi
  We consider polynomial codes for private distributed matrix multiplication (PDMM/SDMM). 
  Existing codes for PDMM are either specialized for the outer product partitioning (OPP), or inner product partitioning (IPP), or are valid for the more general grid partitioning (GP).
  We design extension operations that can be applied to a large class of OPP code designs to extend them to the GP case. Applying them to existing codes improves upon the state-of-the-art for certain parameters.
 Additionally, we show that the GP schemes resulting from extension fulfill additional combinatorial constraints, potentially limiting their performance. We illustrate this point by presenting a new GP scheme that does not adhere to these constraints and outperforms the state-of-the-art for a range of parameters.
\end{abstract}

\section{Introduction}

We consider the problem of a main node distributing the multiplication of two private matrices $\bfA$ and $\bfB$ to $N$ honest-but-curious worker nodes. The workers carry out the required computation, but up to $T$ of them may share information to learn more about the private matrices, where $T$ is called the collusion parameter. The goal is to design computational tasks to be sent to as few worker nodes as possible, ensuring: \begin{enumerate*}[label={\emph{(\roman*)}}]
    \item \emph{information-theoretic privacy} of the matrices $\bfA$ and $\bfB$; and \item \emph{decodability} of the product $\bfA \bfB$ upon receiving the answers of the workers' computations.
\end{enumerate*} This setting, referred to as \emph{private/secure distributed matrix multiplication} (PDMM/SDMM), has been studied in recent years in the literature, e.g., \cite{chang2018capacity,chang2019upload,doliveira2021degreec,makkonen2023algebraica,doliveira2020gaspa,aliasgari2020private,makkonen2024general,karpuk2024modular,machado2022root}.

\begin{figure}[t]
  \resizebox{0.9\linewidth}{!}{
  \input{figures/fishplot_40_20.tex}}
  \caption{ \small This graph shows which known scheme requires the fewest workers for a given $2\leq K=L \leq 40$, $2\leq M \leq 40$, and $T=20$. Circles mark GP schemes from the literature, namely ROU~\cite{machado2022root}, BGK~\cite{byrne2023straggler}, MP~\cite{karpuk2024modular}, and \ggaspr~\cite{karpuk2024modular}. Triangles mark a new extension of the existing OPP scheme \dogrs~\cite{hofmeister2025cat} to the GP. Squares mark the scheme presented in \cref{sec:newscheme}.
      The x-axis represents how many blocks of $\bfA$ and $\bfB$ need to be multiplied. The y-axis indicates how wide or tall $\bfA$ and $\bfB$ are as block matrices. Points on the x-axis correspond to cases where $K=M=L$. Intuitively, the further up a point is, the closer it is to OPP rather than IPP. \vspace{-2em}}
      \label{fig:fishplot}
\end{figure}

Typically, the computational tasks are designed as follows. The matrix $\bfA$ is split into $K \times M$ equally sized blocks and $\bfB$ into $M \times L$ equally sized blocks. Those blocks are then embedded as coefficients of two polynomials $\bm{F}(x)$ and $\bm{G}(x)$ such that all the blocks of the product $\bfA \bfB$ appear as coefficients of the product polynomial $\bm{H}(x) \define \bm{F}(x)\bm{G}(x)$. As such, each worker receives an evaluation of $\bm{F}(x)$ and $\bm{G}(x)$ that they multiply and send back to the main node. Upon receiving enough computation results from the workers, the product $\bm{A}\bm{B}$ can be decoded. 
The main task in this setting is the design of the polynomials $\bfF(x)$ and $\bm{G}(x)$ and the parameters $K,M$ and $L$, which determine a tradeoff between main-node-to-worker communication (upload cost), worker-to-main-node communication (download cost), per-worker computation cost, and the total number of workers $N$ required. They are further influenced by how wide or tall $\bfA$ and $\bfB$ are. 
In this work, we consider $K, M, L$, and the collusion parameter $T$ to be given and aim to design polynomials $\bfF(x)$ and $\bfG(x)$ that minimize $N$, the total number of workers required for the multiplication.

\emph{Related work:} This problem has received significant attention in the literature. We mention a few closely related works and refer the reader to references within the cited papers. In \cite{doliveira2020gaspa,doliveira2021degreec,hofmeister2025cat}, the polynomials are designed for the special case $M=1$, called the \emph{outer product partitioning} (OPP) case. In \cite{lopez2022secure,mital2022secure,machado2023hera, makkonen2024flexible}, codes for the special case $K=L=1$, called \emph{inner product partitioning} (IPP), were presented. The general case, referred to as \emph{general partitioning} or \emph{grid partitioning} (GP), has been considered in~\cite{aliasgari2020private,makkonen2024general,karpuk2024modular,byrne2023straggler}. An interesting observation made in \cite{doliveira2020gaspa} is that the value of $N$ is not only affected by the total degree of $\bfH(x)$ (prior to \cite{doliveira2020gaspa}, $N$ was usually taken as $\mathrm{deg}(\bfH)+1$), but can also be reduced by using the number of zero coefficients in it, when certain conditions hold. This observation led to constructions with low values of $N$ in the OPP case~\cite{doliveira2020gaspa,doliveira2021degreec} and was used in the GP case~\cite{byrne2023straggler}. Leveraging the properties of roots of unity can lower the number of required workers~\cite{mital2022secure,karpuk2024modular,hofmeister2025cat}. %

\pagebreak
\emph{Contributions:} We focus on the GP case and design novel codes that require fewer workers than existing schemes in many parameter regimes. To achieve this, our approach builds on the concept of degree tables (DTs) introduced in \cite{doliveira2021degreec} and cyclic-addition degree tables (CATs) introduced in \cite{hofmeister2025cat}, which are specifically designed for OPP. %
In particular, we present the following contributions: \begin{itemize}
  \item Inspired by the relation of GGASP \cite{karpuk2024modular} to \gaspr \cite{doliveira2021degreec} we formally define three extension operations that can be used to extend a large class of DTs and CATs to GP.
      \item Applying the extension operations to the OPP schemes \catx and \dogrs from~\cite{hofmeister2025cat}, yields new GP schemes which outperform the state-of-the-art for a wide range of parameters. 
      \item We show that the grid extension operations impose additional unnecessary constraints on the resulting DTs/CATs. Leveraging this fact, we design GP-CATs tailored specifically to GP, which significantly outperform existing schemes, and those designed above, across a wide range of parameters.
    \end{itemize}
    A comparison of the newly proposed schemes and the literature for a slice of the parameter space is shown in \cref{fig:fishplot}.

We limit the scope of this work to the number of workers used by a scheme for given $K, M, L$, and $T$.
Other considerations, not addressed in this work, are, for instance, communication-efficient recovery of the result from a higher number of worker nodes, or straggler tolerance, \emph{i.e.}, sending tasks to a larger number $N^\prime > N$ of workers, of which any subset of size $N$ is sufficient to obtain the result.
Additionally, all schemes operate in finite fields and differ in the field sizes they support. We take the position that some large prime fields should be supported to allow for approximate multiplication of real matrices. This is true for the schemes we design.

\section{Preliminaries} \label{sec:prelims}

\emph{Notation:}
Bold lowercase letters denote column vectors and bold uppercase letters denote matrices. For vectors $\bm{x}$ and $\bm{y}$ of lengths $n_x$ and $n_y$, let $\bm{x} \oplus \bm{y}$ denote $(x_1+y_1, x_1+y_2, \dots, x_1+y_{n_y}, x_2+y_1, x_2+y_2, \dots, x_2+y_{n_y}, \dots, x_{n_x}+y_{n_y})^T$.
Let $\bfV(\bm{x}, \bm{y})$ denote the generalized Vandermonde matrix $\bfV(\bm{x}, \bm{y}) \defeq (x_i^{y_j})_{1\leq i \leq n_x, 1\leq j \leq n_y}$, \emph{i.e.}, the $n_x \times n_y$ matrix with entry $(i,j)$ equal to $x_i^{y_j}$. 
For integers $a$ and $b$, define the vector $\vecint{a}{b} \defeq (a, a+1, \dots, b)^T$ and the set $\setint{a}{b} \defeq \{a, a+1, \dots, b\}$. 
For vectors $\bm{x}$ and $\bm{y}$, $\bm{x} || \bm{y}$ denotes their concatenation. Let $\bm{x}_{a:b}$ denote $(x_a, x_{a+1}, \dots, x_{a+b})^T$. For sets $\cA$ and $\cB$, define the sumset $\cA+\cB\defeq\{a+b | a\in \cA, b\in \cB\}$. Further, let $a+\cB\defeq \{a+b|b\in \cB\}$ and $a\cdot\cB\defeq\{a\cdot b | b \in \cB\}$. For a vector $\bm{x}= (x_1, \dots, x_{n_x})^T$ of length $n_x$, let $\{\bm{x}\} \defeq \{x_i | i \in \setint{1}{n_x}\}$ and conversely for a set $\cX$ of integers, let $\vect(\cX)$ denote the vector containing the elements of $\cX$ in increasing order.
Let $a\pm \bm{x} \defeq (a\pm x_1,\dots, a\pm x_{n_x})$.
We define the following type of generalized arithmetic progression with common differences $1$ and $x$ as
  \begin{align*}
    \gap(\ell, x, r) \defeq (&0, 1, 2, \dots, r-1, \\
    &x, x+1, x+2, \dots, x+r-1,\\ 
                        &2x, 2x+1, 2x + 2, \dots, 2x+r-1, \dots),
  \end{align*}
  such that $\gap(\ell, x, r)\in \mathbb{Z}^\ell$, where we call $r$ its chain length.

\subsection{(Cyclic-Addition) Degree Tables for PDMM}
The matrices $\bfA$ and $\bfB$ are supposed to be generated from a finite alphabet, typically a finite field $\Fp$.
Given the parameters $K, M, L$ and $T$, the matrix $\bfA$ and the matrix $\bfB$ are split into equal-sized blocks $\bfA_{i,j}$ and $\bfB_{i,j}$, respectively. Then, $T$ matrices $\bfR_1,\cdots,\bfR_T$, each of the same dimension as $\bfA_{i,j}$ and $T$ matrices $\bfS_1,\cdots,\bfS_T$ each of the same dimension as $\bfB_{i,j}$ are generated. The entries of $\bfR_\ell$ and $\bfS_\ell$ for $\ell\in\{1,\cdots T\}$ are drawn independently and uniformly at random from $\Fp$.

The majority of schemes for PDMM use a polynomial encoding of the form 
\begin{align}
  \bfF(x) &= \sum_{i=1}^{K} \sum_{j=1}^M \bfA_{i,j} x^{\alpha^{(p)}_{(i-1)M + j}} + \sum_{i=1}^{T} \bfR_i x^{\alpha^{(s)}_{i}}, \label{eq:encF} \\
\bfG(x) & = \sum_{i=1}^{L} \sum_{j=1}^M \bfB_{i,j} x^{\beta^{(p)}_{(i-1)M+j}} + \sum_{i=1}^{T} \bfS_i x^{\beta^{(s)}_{i}}, \label{eq:encG}
\end{align} where the degrees are entries of carefully chosen degree vectors
    $\ap \in \mathbb{Z}^{K\cdot M}, \bp \in \mathbb{Z}^{L \cdot M}, \as \in \mathbb{Z}^{T},$ and $\bs\in \mathbb{Z}^T$. In addition, the evaluation points $\boldsymbol{\rho} \in \mathbb{F}_p^{N}$ are carefully chosen. 
    We restrict our focus to these encodings.

For the OPP schemes, \emph{i.e.}, for $M=1$, the work in \cite{doliveira2020gaspa,doliveira2021degreec} focused on reducing the number of non-zero coefficients in $\bfH(x)$. To that end, the authors proposed the use of an addition table, termed \emph{degree table}, to understand the implications of the choice of vectors $\ap, \bp, \as$ and $\bs$ that satisfy the privacy and decodability constraints on the number of non-zero coefficients in $\bfH(x)$. Let $\boldsymbol{\alpha}$ and $\boldsymbol{\beta}$ be the concatenation of $\ap||\as$ and $\bp||\bs$, respectively. The degree table is a pictorial representation of the vector $\boldsymbol{\alpha}\oplus\boldsymbol{\beta}$. The number of distinct entries in $\boldsymbol{\alpha}\oplus\boldsymbol{\beta}$ maps to the number of non-zero coefficients in $\bfH(x)$, and is therefore equal to $N$~\cite{doliveira2020gaspa}. 

The goal becomes to choose $\boldsymbol{\alpha}$ and $\boldsymbol{\beta}$ in a way to maintain the privacy and decodability constraints and reduce the number of distinct entries in $\boldsymbol{\alpha}\oplus\boldsymbol{\beta}$. For ease of notation, the degree table is divided into four quadrants: \begin{enumerate*}[label={\emph{(\roman*)}}]
    \item the top-left quadrant with entries pertaining to $\ap\oplus\bp$;
    \item the top-right quadrant with entries of $\ap\oplus\bs$;
    \item the bottom-left quadrant with entries of $\as\oplus\bp$; and
    \item the bottom-right quadrant with entries of $\as\oplus\bs$.
\end{enumerate*} We denote by $\TL, \TR, \BL$ and $\BR$ the sets of entries pertaining to each quadrant. A valid DT and CAT puts conditions on the entries in $\TL, \TR, \BL$ and $\BR$ as explained in detail in~\cite{hofmeister2025cat}.

  The degree table has been adapted to algebraic geometry codes (PoleNumber tables) in \cite{makkonen2023algebraica} and to evaluations at roots of unity (cyclic-addition degree tables) in \cite{hofmeister2025cat}. Choosing evaluation points at roots of unity allows for a modulo wrap-around in the degrees of $x$ in $\bfH(x)$, which translates to a modulo wrap-around in the entries of the degree table. We also note that in a CAT, all additions are in $\Zq$. A depiction of a CAT is given in \cref{fig:cat422}, where the evaluation points are chosen to be powers of a primitive $29$th root of unity. With $K=6$, $L=3$, $M=1$, $T = 2$, we choose $\ap = (0:5)$, $\as = (6,28)$, $\bp=(0,22,15)$ and $\bs=(7,8)$. The entries of the vector $\boldsymbol{\alpha}\oplus\boldsymbol{\beta} \mod 29$ are shown in the table.

  The notions of a DT and CAT extend straightforwardly to the GP, which we call GPDT and GPCAT, respectively. The constraints on the entries of $\ap$, $\as$, $\bp$, and $\bs$ remain the same, but the constraints on the entries of the DT or the CAT change. To explain GPDT and GPCAT, we need some additional notation. In GP, the vectors $\ap$ and $\bp$ are split into $K$, resp.\ $L$, vectors $\ap_1,\cdots,\ap_K$, resp.\ $\bp_1,\cdots,\bp_L$, each of length $M$. The division of $\ap\oplus \bp$ into sums of $\ap_k\oplus \bp_\ell$, $k\in \{1:K\}, \ell \in \{1:L\}$, divides the top left quadrant of the degree table into $K\times L$ blocks, each of size $M\times M$ (pictorially placed as $K$ rows and $L$ columns); see \cref{fig:gcat422}. For ease of notation, we denote by $\TL_{k, \ell}$ the set of entries pertaining to the $(k,\ell)$th block, i.e. $\ap_k\oplus \bp_\ell$. Let $\eap_k[i]$ and $\ebp_\ell[j]$ be the $i$th entry of $\ap_k$ and the $j$th entry of $\bp_\ell$, respectively. 
  We denote by $\cU_{k,\ell} \defeq \{\eap_k[i] + \ebp_\ell[M-i+1] \st i\in\{1:M\}\}$ the set of entries pertaining to the main anti-diagonal of the $(k,\ell)$th block. %
  Further, let $\cA_{k,\ell} \defeq \{\eap_k[i] + \ebp_\ell[j] \st i,j\in \setint{1}{M}, i+j < M+1\}$ and $\cB_{k,\ell} \defeq \{\eap_k[i] + \ebp_\ell[j] \st i,j\in \setint{1}{M}, i+j > M+1\}$ be the sets of entries of the $(k, \ell)$th block above and below the antidiagonal, respectively.
  We denote by $\TLt_{k,\ell} \defeq \cA_{k,\ell} \cup \cB_{k,\ell}$ the set of off-antidiagonal entries and note that $\TL_{k,\ell} = \cA_{k,\ell} \cup \cU_{k,\ell} \cup \cB_{k,\ell}$.
  The preceding definitions are illustrated in \cref{fig:illustrextGP}.

  For a given $M>1$ and an OPP-DT/CAT for parameters $(K^\mathrm{OPP}=KM, L^\mathrm{OPP}=L, T^\mathrm{OPP}=T)$ specifically, we divide the top-left quadrant in a similar way as for GP-DTs/CATs and split $\ap$ into $K$ equally sized vectors $\ap_1, \dots, \ap_{K}$. Define $\cAo_{i, \ell} \defeq \{\eap_k[i] + \ebp_\ell \st i\in\setint{1}{M-1}\}$ and the singleton $\cUo_{k,\ell} \defeq \{\eap_k[M] + \ebp_\ell\}$. Similarly, $\cCo_\ell \defeq \{\as + \ebp_\ell \}$.
  The preceding definitions are illustrated in \cref{fig:illustrextOPP}.

 For instance, in \cref{fig:gcat422}, $\ap_1 = (0,1,2)$, $\bp_2= (22,23,24)$, $\TL_{1,2} = \setint{22}{26}$ and $\cU_{1,2} = \{24\}$. In GP, instead of requiring each entry of $\TL$ to be unique within the table, the only required constraint is that the entries of each $\cU_{k,\ell}$ for all $k\in \setint{1}{K}$ and $\ell \in \setint{1}{L}$ cannot appear elsewhere in the table.   We provide next a formal definition of GPDT and GPCAT. %

\begin{figure}
    \begin{subfigure}{0.31\linewidth}
      \centering
      \input{figures/CAT_6_3_2.tex}
      \vspace{-0.4cm}
      \caption{\small (6, 3, 2)-CAT.}
      \label{fig:cat422}
    \end{subfigure}
    \begin{subfigure}{0.68\linewidth}
      \centering
      \input{figures/GCAT_2_3_3_2.tex}
      \caption{\small (2, 3, 3, 2)-GPCAT.}
      \label{fig:gcat422}
    \end{subfigure}
    \caption{\small The \catx scheme~\cite{hofmeister2025cat} for $K=6, M=1, L=3$ and $T=2$ using powers of the $29$th root of unity as evaluation points shown in (a) is extended to the GP with $K=2$, $M=3$, $L=3$ and $T=2$ shown in (b). The first column of each table corresponds to the vector $\boldsymbol{\alpha}$ and the first row to $\boldsymbol{\beta}$.}
\end{figure}

  \begin{definition}[GP Degree Table (GPDT) and GP Cyclic Addition DT (GPCAT)] \label{def:gptables}
    Consider vectors $\ap \in \Fp^{KM}$, $\bp \in \Fp^{ML}$, $\as \in \Fp^{T}$, and $\bs\in \Fp^{T}$, let $\TL = \{\ap\} + \{\bp\}$, $\TR = \{\ap\} + \{\bs\}$, $\BL = \{\as\} + \{\bp\}$, $\BR = \{\as\} + \{\bs\}$ and define $\TL_{k,\ell}$, $\cU_{k,\ell}$ and $\TLt_{k,\ell}$ as above. 
    
    \emph{\underline{GP Degree Table:}} A $(K,M,L, T)$-GP degree table that satisfies $T$-privacy and decodability from $N$ workers requires the following properties:
  \begin{enumerate}[label=\Roman*)]
    \item \label{property:gp1} $|\TL \cup \TR \cup \BL \cup \BR| = N$,  %
    \item \label{property:gp2} $\forall k, k^\prime \in \setint{0}{K-1}, \ell, \ell^\prime \in \setint{0}{L-1}, k\neq k^\prime \vee \ell \neq \ell^\prime: a) \cU_{k,\ell} \cap \cU_{k^\prime, \ell^\prime} = \emptyset $, b) $\cU_{k,\ell} \cap \TR = \emptyset$, c) $\cU_{k,\ell} \cap \BL = \emptyset$, d) $\cU_{k,\ell} \cap \BR = \emptyset$, and e) $\cU_{k,\ell} \cap \TLt_{k^\prime, \ell^\prime} = \cU_{k,\ell} \cap \TLt_{k, \ell} = \emptyset$,
    \item $|\{\ap || \as\}| = KM+T$, and $|\{\bp||\bs\}| = LM+T$.
  \end{enumerate}

  \emph{\underline{GP Cyclic Addition DT:}}
  A $(K, M, L, T)$-GP cyclic-addition degree table (GPCAT) that satisfies $T$-privacy and decodability from $N$ workers requires properties \ref{property:gp1} and \ref{property:gp2} above and the following property:
  \begin{enumerate}[label=\Roman*)]
  \addtocounter{enumi}{3}
    \item In any prime field $\Fp$ with $q|p-1$, there exist $N$ distinct $q$th roots of unity $\boldsymbol{\rho} = (\rho_1, \dots, \rho_N)$, s.t.\begin{enumerate}
        \item $\bfV(\boldsymbol{\rho}, \boldsymbol{\gamma})$ is invertible and
        \item all $T\times T$ submatrices of $\bfV(\boldsymbol{\rho}, \as)$ and $\bfV(\boldsymbol{\rho}, \bs)$ are invertible,
      \end{enumerate}
  \end{enumerate}
  where %
  $\boldsymbol{\gamma} = \vect(\{\ap || \as\}+\{\bp || \bs\})$, with additions in $\Zq$. %
  \end{definition}
Note that Condition IV) is fulfilled whenever III) holds and   $\as$, $\bs$ are arithmetic progressions with common differences coprime to $q$, cf ~\cite[Lemma~1]{hofmeister2025cat}. 
  For $M=1$, \cref{def:gptables} recovers the definitions of OPP DTs/CATs.

\section{Grid Extension} \label{sec:extensions}

We define three extension operations, transforming 1) DTs for the OPP into DTs for the GP; 2) CATs for the OPP into CATs for the GP; and 3) DTs for the OPP into CATs for the GP, respectively.
The main idea is to construct an OPP DT/CAT for parameters $(K M, L, T)$, where $\ap$ is an arithmetic progression (or generalized arithmetic progression with chain length $r$) and extend it to a GP for $(K, M, L, T)$, by extending $\bp$ to $\bp \oplus \ap_{1:M} - \eap_1$. This results in a GP-DT/CAT where the blocks corresponding to $\TL_{k, \ell}$ have constant antidiagonals. As we will show later, if the original table is a valid DT/CAT, then the extended DT/CAT satisfies \cref{def:gptables} and thus corresponds to a valid PDMM scheme.

\begin{remark}
  The extension operations can be defined for DTs/CATs where $\ap$ is either an arithmetic progression or a generalized arithmetic progression\footnote{The format of our $\gap(\ell, x, r)$ definition is not required, i.e. $1$ does not need to be one of the common differences.} with chain length $M$. For ease of exposition we present the case where $\ap$ is an arithmetic progression here.
\end{remark}

  \begin{figure}[t]
  \vspace{1mm}
    \begin{subfigure}{0.33\linewidth}
      \centering
      \includegraphics[height=5.8cm]{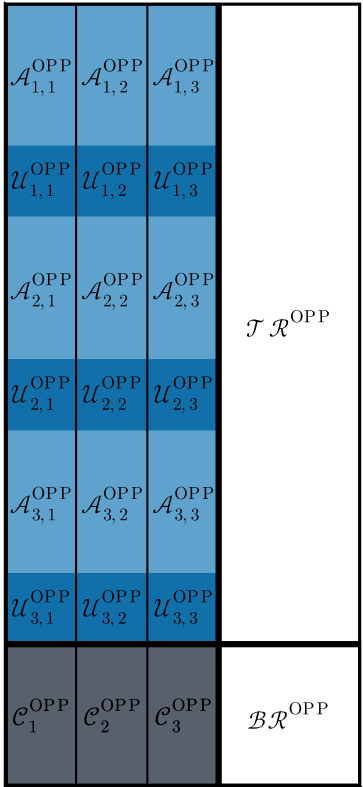}
      \caption{\small OPP.}
      \label{fig:illustrextOPP}
    \end{subfigure}
    \begin{subfigure}{0.65\linewidth}
      \centering
      \includegraphics[height=5.8cm]{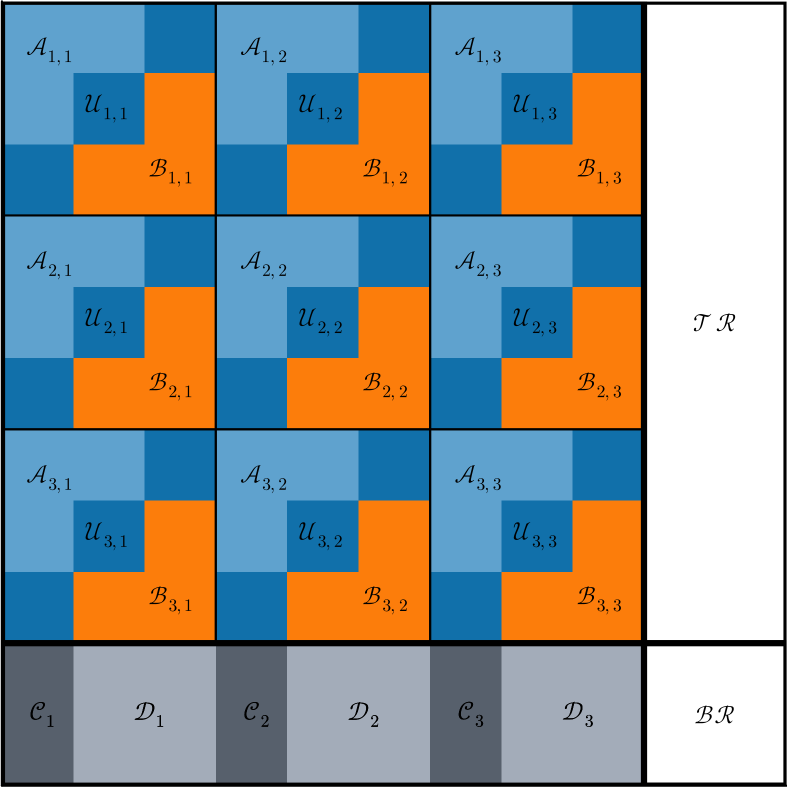}
      \caption{\small GP.}
      \label{fig:illustrextGP}
    \end{subfigure}
    \caption{ \small Illustration of the sets defined for an OPP-DT/CAT and a GP-DT/CAT for $K=M=L=3$ and $T=2$.}
\label{fig:illustrext}
\end{figure}

The formal definitions are as follows.

  \begin{definition}%
    \label{def:dttdt}
    The DT$\to$DT extension of a $(K M, L, T)$-OPP DT $(\ap, \bp, \as, \bs)$, where $\ap$ is an arithmetic progression, is defined as the $(K, M, L, T)$-GP degree table $(\apdtd, \bpdtd, \asdtd, \bsdtd)$ with 
    {\small
    \begin{equationarray*}{ll}
        \apdtd = \ap, &\bpdtd = \bp \oplus \ap_{1:M} - \eap_1,\\
        \asdtd= \as, & \bsdtd = \bs.
    \end{equationarray*}}
  \end{definition}

  \begin{definition}%
    \label{def:cattcat}
    The CAT$\to$CAT extension of a $(K M, L, T)$-OPP CAT $(\ap, \bp, \as, \bs, q^\prime)$, where $\ap$ is an arithmetic progression, is defined as the $(K,M, L, T)$ GP-CAT $(\apctc, \bpctc, \asctc, \bsctc, q=q^\prime)$ with 
    {\small \begin{equationarray*}{ll}
        \apctc = \ap, &\bpctc = \bp \oplus \ap_{1:M} - \eap_1 \bmod q \\
        \asctc= \as, & \bsctc = \bs.
    \end{equationarray*}}
  \end{definition}

  \begin{definition} %
    \label{def:dttcat}
    Let $(\ap, \bp, \as, \bs)$ be a $(K M, L, T)$-OPP DT, where $\ap$, $\as$, and $\bs$ are arithmetic progressions, $\eap_1=\ebp_1=0$ and the common difference of $\ap$ is $1$.
    The DT$\to$CAT extension of  $(\ap, \bp, \as, \bs)$  is defined as the $(K, M, L, T)$-GP-CAT $(\apdtc, \bpdtc, \asdtc, \bsdtc, q)$  with
    {\small\begin{equationarray*}{ll}
        \apdtc = \ap, &\bpdtc = \bp \oplus \ap_{1:M} - \eap_1 \bmod q \\
        \asdtc= \as, & \bsdtc = \bs,
    \end{equationarray*}}
    and $q = \max((\ap||\as) \oplus (\bp||\bs)) - M + 2 + \gamma$, 
    where $\gamma$ is the smallest non-negative integer s.t. $q$ is coprime to the common differences of $\as$ and $\bs$.
  \end{definition}

  \begin{theorem} %
    \label{thm:ext}
    The extension operations according to Definitions \ref{def:dttdt}, \ref{def:cattcat}, and \ref{def:dttcat}, result in valid DTs/CATs with $N \leq N^\prime + (M-1)(K+T)L$ workers where $N^\prime$ denotes the number of workers of the original OPP scheme. Further, $N\geq N^\prime$ for Definitions~\ref{def:dttdt} and \ref{def:cattcat} and $N\geq N^\prime-M+1$ for Definition \ref{def:dttcat}.
  \end{theorem}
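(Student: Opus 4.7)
The plan is to verify that each extension produces a table meeting Properties I--IV of \cref{def:gptables}, and then separately establish the upper and lower bounds on $N$. Throughout, write $d$ for the common difference of $\ap$ and let $\eap^{ext}_n \defeq \eap_1 + (n-1)d$ denote the natural AP extension. Since $\ap$ is an AP, the extension makes the block $(k,\ell)$ of $\TL$ consist of entries of the form $\eap^{ext}_{(k-1)M+i+j-1}+\ebp_\ell$ for $i,j\in[M]$, and the antidiagonal condition $i+j=M+1$ collapses these to the singleton $\cU_{k,\ell}=\{\eap_{kM}+\ebp_\ell\}$. This is the structural observation on which everything rests.

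For validity, I would verify Property II in parts. Across $(k,\ell)$, the singletons $\{\eap_{kM}+\ebp_\ell\}$ are a subset of the original OPP $\TL$ entries and are therefore pairwise distinct by OPP Property II. Disjointness with $\TR$ and $\BR$ is immediate, since these quadrants are unchanged by the extension and the corresponding OPP conditions already preclude collisions. Disjointness with the extended $\BL$ follows by rewriting a hypothetical collision $\eap_{kM}+\ebp_\ell=\eas_t+\ebp_{\ell'}+(j-1)d$ as $\eap_{kM-j+1}+\ebp_\ell=\eas_t+\ebp_{\ell'}$, which is ruled out by OPP Property II for every $j\in[M]$. For disjointness with off-antidiagonal entries $\TLt_{k',\ell'}$, the intra-block case ($(k,\ell)=(k',\ell')$) is immediate because $i+j\neq M+1$; the inter-block case reduces to $\eap^{ext}_{kM}+\ebp_\ell\neq\eap^{ext}_{(k'-1)M+i+j-1}+\ebp_{\ell'}$, which for indices in $[1,KM]$ follows from OPP Property II by a short case analysis on whether $k<k'$ or $k>k'$. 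Property III holds because $\ap||\as$ is unchanged and because $\bpdtd||\bs$ inherits distinctness from the original. Property IV (for CATs) follows from the preserved field structure and \cite[Lemma 1]{hofmeister2025cat}.

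For the upper bound, I would count new distinct values quadrant by quadrant. The new $\TL$ is $\{\eap^{ext}_n+\ebp_\ell:n\in[1,KM+M-1],\ell\in[1,L]\}$, which adds at most $(M-1)L$ new distinct values over the original $\TL$ (one per new index $n\in[KM+1,KM+M-1]$ and each $\ell$). The new $\BL$ is $\{\eas_t+\ebp_\ell+(j-1)d:t\in[T],\ell\in[L],j\in[M]\}$, so it adds at most $(M-1)TL$ new distinct values compared to the original. $\TR$ and $\BR$ are unchanged. Summing gives $N\leq N^\prime+(M-1)(1+T)L\leq N^\prime+(M-1)(K+T)L$, which is the stated bound.

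For the lower bound, the Definitions~\ref{def:dttdt} and \ref{def:cattcat} cases are immediate: taking $j=1$ recovers the original OPP entries verbatim (Def.~\ref{def:dttdt}) or verbatim modulo the same $q$ (Def.~\ref{def:cattcat}), so $N\geq N^\prime$. For the DT$\to$CAT extension of Definition~\ref{def:dttcat}, the original DT entries lie in $[0,\max]$ where $\max=\max((\ap||\as)\oplus(\bp||\bs))$, and the new modulus is $q=\max-M+2+\gamma$. Two distinct original entries collide modulo $q$ only if they differ by a multiple of $q$; since all entries lie in $[0,\max]$ and $\max-q=M-2-\gamma\leq M-2$, at most $M-1$ collision pairs can exist, so at most $M-1$ distinct values are lost, yielding $N\geq N^\prime-M+1$. \textbf{The main obstacle} will be the inter-block case of Property II e), where the argument must handle collisions involving extended AP indices $\eap^{ext}_n$ with $n>KM$ that appear in the last block row $k'=K$; this requires ruling out equalities of the form $(kM-m)d=\ebp_{\ell'}-\ebp_\ell$ for specific $m$, which likely needs a mild genericity condition on $\bp$ that the \catx and \dogrs source schemes satisfy by construction.
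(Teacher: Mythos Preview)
Your overall strategy matches the paper's: verify Properties~II--IV by reducing each potential collision to a forbidden collision in the underlying OPP table, then count entries for the bounds on $N$. Most of your sub-arguments (II~a--d, III, IV, the lower bounds) are correct and essentially identical to the paper's.

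The place where you go wrong is your flagged ``main obstacle'' for Property~II~e). No genericity condition on $\bp$ is needed, and as stated your proposal would only prove a weaker theorem than the one claimed. The fix is exactly the shift trick you already used for II~c). Suppose $u_{k,\ell}\in\cB_{k',\ell'}$, so for some $\mu\in\setint{0}{M-2}$,
\[
\eap_1+(kM-1)d+\ebp_\ell \;=\; \eap_1+(k'M+\mu)d+\ebp_{\ell'}.
\]
Subtract $(\mu+1)d$ from both sides:
\[
\eap_1+(kM-2-\mu)d+\ebp_\ell \;=\; \eap_1+(k'M-1)d+\ebp_{\ell'} \;=\; u_{k',\ell'}.
\]
Since $kM-2-\mu\in\setint{(k-1)M}{kM-2}$, the left side lies in $\cAo_{k,\ell}$ and the right side is the single element of $\cUo_{k',\ell'}$; both are genuine OPP-table entries, so OPP Property~II gives the contradiction. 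The mistake in your framing was trying to locate the $\cB_{k',\ell'}$-entry itself inside the OPP table (where indeed it may lie beyond row $KM$ when $k'=K$); instead, shift so that the $\cB$-side becomes a $\cU$-entry and the $\cU$-side becomes an $\cA$-entry. This works uniformly for all $k'$, including $k'=K$, and also works modulo $q$ for the CAT extensions.

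As a side note, your upper-bound count $N\le N'+(M-1)(1+T)L$ is actually tighter than the paper's $(M-1)(K+T)L$: the paper bounds $\lvert\bigcup_{k,\ell}\cB_{k,\ell}\rvert\le(M-1)KL$ block by block and does not exploit the overlap you observed across consecutive $k$.
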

  We provide an intuitive explanation here and defer the formal proof to 
  \ifarxiv
    \cref{sec:proofthmext}.
  \else
    \cite[Appendix~A]{arxivversion}.
  \fi

  Consider the sets defined for the entries of an OPP table and a GP table in \cref{sec:prelims} and illustrated in \cref{fig:illustrext}.
  The DT$\to$DT and CAT$\to$CAT extension operations are designed such that the sets 
      $\cA_{k,\ell}$, $\cU_{k,\ell}$, $\cC_{\ell}$, $\TR$, $\BR$ in the GP table equal their counterpart $\cAo_{k,\ell}$, $\cUo_{k,\ell}$, $\cCo_{\ell}$, $\TRo$, $\BRo$ in the OPP table. In the DT$\to$CAT extension they are equal up to a modulo reduction.
  Further, the extension operations are designed in a way so as to make the $M\times M$ blocks in the top-left quadrant of the GP table have constant antidiagonals.
  By assumption, $\cAo_{k, \ell}$ forms an arithmetic progression which is then extended by one element by the singleton set $\cU_{k, \ell}$. The set $\cB_{k,\ell}$ then further extends this arithmetic progression by $M-1$ more elements.
  Because of this relation, if for some $k,k^\prime \in \setint{1}{K}$ and $\ell, \ell^\prime$, the set $\cB_{k,\ell} \cap\, \cU_{k^\prime, \ell^\prime} \neq \emptyset$, that would imply that $\cUo_{k,\ell}\cap \cAo_{k^\prime,\ell^\prime} \neq \emptyset$, i.e. the OPP scheme would be invalid.
  The sets $\cD_\ell$ relate to $\cC_\ell$ in a similar way.

  The upper bound on the numbers of workers follows from the fact that only the sets $\cB_{k, \ell}$ and $\cD_\ell$ are new in the GP compared to the OPP table. Further, $|\cB_{k, \ell}|\leq M-1$, since each block $\TL$ has constant antidiagonals. Thus, $|\bigcup_{k,\ell} \cB_{k,\ell}|\leq (M-1)KL$ and $|\bigcup_{\ell} \cD_{\ell}| \leq LT(M-1)$.

    Most known schemes for the OPP can be expressed as a degree table where $\ap$ is an arithmetic progression.
    For instance, \gaspr~\cite{doliveira2021degreec}, A3S~\cite{kakar2019capacity}, and \catx and \dogrs \cite{hofmeister2025cat}.
    As a consequence, a number of new schemes can be derived from these existing OPP schemes.
    DT$\to$CAT-extension can be applied to \gaspsmall, \gaspbig, A3S, and to \dogrs\footnote{Note that \dog$_{r=T,s=T}$ equals A3S.} whenever $r\in\{1, T\}$ and $s\in\{1, \min\{T,K+r\}\}$.
    DT$\to$DT-extension can be applied to all the schemes in the preceding list as well as \gaspr, and \dogrs for general $r$ and $s$.
    CAT$\to$CAT-extension can be applied to the \catx scheme.

    We evaluate the numbers of workers used by the new GP schemes arising from the extensions above in \cref{sec:comparison} and find that some compare favorably to the state-of-the-art for certain ranges of parameters.

    \begin{remark} \label{rem:addconstr}
    Applying the extension operations to OPP schemes produces GP-DTs/CATs that fulfill additional constraints (formally stated in \cref{lem:constraints}), not required for valid GP-DTs/CATs in general.
    This observation suggests that the number of workers used by a GP scheme can be reduced further by directly designing the scheme for the GP rather than extending an OPP scheme. 
    The only example, known to the authors, of a scheme that can be expressed as a GP-DT or GP-CAT and cannot be expressed as the extension of an OPP DT/CAT is the scheme of~\cite{byrne2023straggler}, which we refer to as the BGK scheme.
    \end{remark}

    \begin{lemma} \label{lem:constraints}
      For any $(K, M, L, T)$-GP table that can be expressed as the extension of a $(KM, L, T)$-OPP table according to Definition~\ref{def:dttdt},~\ref{def:cattcat}, or \ref{def:dttcat} the following hold for all $k,k^\prime \in \setint{1}{K}$ and $\ell, \ell^\prime \in \setint{1}{L}$ \begin{align*}
      &\cA_{k,\ell} \cap \cA_{k^\prime, \ell^\prime} \neq \emptyset \iff k=k^\prime, \ell=\ell^\prime \text{ and } \\
      &\cA_{k,\ell} \cap \TR = \cA_{k,\ell} \cap \BR = \emptyset.
      \end{align*}
    \end{lemma}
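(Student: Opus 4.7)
The plan is, for each of the three extension operations, to first derive an explicit description of $\cA_{k,\ell}$, $\TR$, and $\BR$ in the extended GP table in terms of the original OPP vectors $(\ap, \bp, \as, \bs)$, and then invoke validity of the OPP table. Because $\ap$ is an arithmetic progression with common difference $d$, a direct computation from Definitions~\ref{def:dttdt}--\ref{def:dttcat} gives
\[
\apdtd_k[i] + \bpdtd_\ell[j] \;=\; \ap[(k-1)M + i + j - 1] + \bp[\ell],
\]
with the analogous identity mod $q$ for the pairs $(\apctc,\bpctc)$ and $(\apdtc,\bpdtc)$. Restricting to $i + j \leq M$ yields
\[
\cA_{k,\ell} \;=\; \{\ap[(k-1)M + t] + \bp[\ell] : t \in \{1, \ldots, M-1\}\} \;=\; \cAo_{k,\ell},
\]
up to reduction mod $q$ in the CAT-valued extensions. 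Since $\ap$, $\as$, and $\bs$ are unchanged by all three extensions, $\TR$ and $\BR$ in the extended table coincide with $\TRo$ and $\BRo$ (again mod $q$ where applicable).

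For the pairwise disjointness, suppose $\ap[(k-1)M + t] + \bp[\ell] = \ap[(k'-1)M + t'] + \bp[\ell']$ for some $t, t' \in \{1, \ldots, M-1\}$. The original $(KM, L, T)$-OPP has $M^{\mathrm{OPP}} = 1$, so every entry of its top-left quadrant is an antidiagonal singleton of the form $\cUo_{i,\ell} = \{\ap[i] + \bp[\ell]\}$, and Property II.a of \cref{def:gptables} (applied to the OPP table) forces these singletons to be pairwise distinct. Hence $(k-1)M + t = (k'-1)M + t'$ and $\ell = \ell'$. Because $t, t' \in \{1, \ldots, M-1\}$ lie within a single block of width $M$, the first equation further forces $k = k'$ and $t = t'$, so $(k,\ell) = (k', \ell')$. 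The claims $\cA_{k,\ell} \cap \TR = \emptyset$ and $\cA_{k,\ell} \cap \BR = \emptyset$ are then immediate: each element of $\cA_{k,\ell}$ is of the form $\ap[r] + \bp[\ell]$, and Properties II.b and II.d of the OPP table forbid such a sum from equaling an element of $\TRo = \TR$ or $\BRo = \BR$.

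The main obstacle is the DT$\to$CAT case, where the original OPP DT is valid in integer arithmetic while the extended table lives in $\mathbb{Z}_q$. Integer distinctness of OPP sums does not immediately imply distinctness modulo $q$, so a separate argument is needed to rule out new modular coincidences either among the sums $\ap[r] + \bp[\ell]$ appearing in $\cA_{k,\ell}$ or between them and entries of $\TR$ and $\BR$. This is built into the choice of $q$ in \cref{def:dttcat}, as implicit in the validity statement of \cref{thm:ext}: the modulus is chosen large enough relative to the OPP sum range to prevent the relevant entries from wrapping around and colliding. Once this modular subtlety is discharged, the argument above applies uniformly to all three extension operations.
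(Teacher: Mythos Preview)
Your approach is essentially the paper's: identify $\cA_{k,\ell}$ in the extended GP table with $\cAo_{k,\ell}$ in the underlying OPP table (and likewise $\TR=\TRo$, $\BR=\BRo$), then invoke validity of the OPP table, where every top-left entry is a distinct singleton $\cUo$. The paper compresses this into two sentences; you spell out the index arithmetic showing $(k-1)M+t=(k'-1)M+t'$ forces $k=k'$, which is a welcome elaboration but not a different idea.

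One comment on the DT$\to$CAT case: you correctly flag the modular subtlety and defer it to \cref{thm:ext}, but ``the modulus is chosen large enough \dots\ to prevent the relevant entries from wrapping around and colliding'' overstates what the choice of $q$ in \cref{def:dttcat} actually guarantees. With $q=\max((\ap\|\as)\oplus(\bp\|\bs))-M+2+\gamma$, the largest $M-1$ integer sums in the OPP table \emph{do} wrap, landing in $\setint{0}{M-2}=\cA_{1,1}$. So the reduction mod $q$ is not injective on the full sum range, and whether $\cA_{1,1}\cap\BR=\emptyset$ survives depends on exactly which OPP entries attain the top $M-1$ values. The paper's two-line proof glosses over this as well, so you are not missing anything relative to the paper; but your parenthetical justification is not quite the right mechanism, and a fully rigorous treatment of the DT$\to$CAT case would need to track where those wrapped entries come from.
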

    \begin{proof}
      If either one of the conditions were not satisfied for the $(K, M, L, T)$-GP-DT/CAT, then in the $(KM, L, T)$-OPP DT/CAT, the set $\cAo_{k,\ell}$ intersects either $\cAo_{k^\prime, \ell^\prime}$ or $\TRo \cup \BRo$, making it invalid. 
    \end{proof}

    We design a GP-CAT not adhering to these constraints in \cref{sec:newscheme} and demonstrate that it requires fewer workers for a range of parameters in \cref{sec:comparison}.

\begin{remark}
  While exploring the extension operations given in \cref{sec:extensions}, we realized that some schemes from the literature can be related to each other via these extension operations. For instance, 
\ggasp \cite{karpuk2024modular} is the DT$\to$DT-extension of \gasp \cite{doliveira2020gaspa,doliveira2021degreec}. The ROU scheme \cite{machado2022root} is equivalent up to normalization and reordering to the DT$\to$CAT-extension of A3S~\cite{kakar2019capacity}. The DFT codes of \cite{mital2022secure} are the DT$\to$CAT-extension of \gaspbig, specialized for the IPP.
We will provide a formal proof and detailed explanations in an extended version of this paper.
\end{remark}

\section{A New Construction of GP-CATs} \label{sec:newscheme}

\begin{figure}
  \centering
    \resizebox{0.85\linewidth}{!}{
          \input{figures/rainbowscheme2_4_2_5}
        }
  \caption{\small The cyclic-addition degree table of \cref{constr:newscheme} for $K=L=2$, $M=4$ and $T=5$. For this case, $x=5$, $z=3$, $y=15$ and $q=29$.}
\end{figure}

We present a new construction of GP-CATs, that is not the extension of any OPP scheme.
To this end for given $K, M, L$ and $T$, define $\ztr \defeq L + \lceil \frac{K+T}{KM+K} \rceil$, $\zbl \defeq \lceil \frac{L+T-1}{K} \rceil$ and 
let $\zbr$ be defined as $\lfloor \frac{L+T-1}{KM-T+1}\rfloor + 1$ if $T\leq KM$ and $L+T-1 + \lfloor \frac{K+T}{KM+K} \rfloor$ otherwise.

    The construction is defined as follows.

\begin{construction} \label{constr:newscheme}

  Let $K\geq L$, $M\geq2$, and let \begin{equationarray*}{ccc}
    x = M+1, & y = z x, & q = Ky - 1, 
    \end{equationarray*}
    where $z = \max\{L+1, \ztr, \zbl, \zbr\}$. The construction is defined by $(\ap, \bp, \as, \bs, q)$, where
    {\small
  \begin{equationarray*}{cc}
    \ap \!= \!\gap(KM, \!y, \!M) \!\bmod q, & \bp \!=\! \gap(LM, \!x, \!M) \!\bmod q,  \\
    \as \! = \!x \!\cdot \! (0:T-1) \!- \!1 \!\bmod q,  &\!\!\! \bs \!= \! Lx \!+\! y \!\cdot\! (0:T-1) \!\bmod q.
    \end{equationarray*}
  }
\end{construction}
\begin{theorem} \label{thm:newscheme}
  \Cref{constr:newscheme} is a valid GP-CAT, i.e., it satisfies \cref{def:gptables} with $N\leq K(M+1)z-1$.
\end{theorem}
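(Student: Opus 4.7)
My plan is to verify the four conditions of Definition~\ref{def:gptables} in order of increasing difficulty. The bound $N \leq K(M+1)z - 1$ is immediate, because $q = Ky - 1 = K(M+1)z - 1$ and every entry in the table lies in $\Zq$, so the table contains at most $q$ distinct values.

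I would first dispatch Condition III). The entries of $\ap$ admit the unique base-$y$ representation $(k-1)y + (i-1)$ with $0 \leq i-1 < M < y$, so they are pairwise distinct and do not wrap modulo $q$. The entries of $\as$ form an AP with step $x$, coprime to $q = Kzx - 1$, hence distinct modulo $q$. Disjointness $\as \cap \ap = \emptyset$ then follows by reduction modulo $x$: entries of $\as$ with $t \geq 2$ lie in the residue class $-1 \equiv M \pmod{x}$, whereas $\ap$ lies in $\{0, \ldots, M-1\} \pmod{x}$; the remaining entry $-1 \equiv q-1 \pmod{q}$ exceeds the maximum of $\ap$ whenever $z \geq 2$. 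The argument for $\bp \| \bs$ is analogous and additionally uses $z \geq \zbr$ to control wrap-around of $\bs$ modulo $q$. Condition IV) then follows from III) by \cite[Lemma~1]{hofmeister2025cat}, since both $x$ and $y$ are coprime to $q$ (as $q \equiv -1$ modulo each).

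The core of the proof is Condition II). Pairwise disjointness of the antidiagonals $\cU_{k,\ell} = \{(k-1)y + (\ell-1)x + (M-1)\}$ reduces to uniqueness of the base-$y$ representation, since $(\ell-1)x + M - 1 < Lx < y$ (using $z \geq L+1$) and the maximum value $(K-1)y + Lx - 2 < q$. For disjointness of $\cU_{k,\ell}$ from each of $\TR$, $\BL$, $\BR$, and $\TLt_{k',\ell'}$, I would write a generic element of the target set in a uniform form $ay + bx + c$ and solve the congruence $(k-1)y + (\ell-1)x + (M-1) \equiv ay + bx + c \pmod{q}$. Case-splitting on the number of wrap-arounds modulo $q$, together with $q \equiv -1 \pmod{x}$ and $q \equiv -1 \pmod{y}$, reduces each case to a Diophantine constraint whose infeasibility is precisely guaranteed by the corresponding lower bound $z \geq \ztr$, $\zbl$, $\zbr$, or $L+1$. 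The off-antidiagonal case additionally exploits that the antidiagonal offset $M-1 \equiv -2 \pmod{x}$ is separated from the other block offsets, which reduce modulo $x$ to $\{-1, 0, \ldots, M-2\}$.

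I expect the hardest step to be showing $\cU_{k,\ell} \cap \BR = \emptyset$ when $T > KM$: here the coefficient of $y$ in elements of $\BR$ may substantially exceed $K$, producing multiple wrap-arounds modulo $q$, each contributing an additive $+1$ that must be tracked. The ``else'' branch in the definition of $\zbr$ is engineered precisely for this regime and will demand the most delicate bookkeeping.
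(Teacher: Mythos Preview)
Your plan matches the paper's proof: both verify Definition~\ref{def:gptables} by residue analysis modulo $x$ and modulo $y$ (exploiting $q\equiv -1$ modulo each to track wrap-around contributions), and both isolate $\cU_{k,\ell}\cap\BR$ with $T>KM$ as the delicate case requiring the ``else'' branch of $\zbr$. The one shortcut in the paper you may want to borrow is, for the $\BR$ case, to first apply the bijection $\psi\mapsto(\psi-M+1)\,x^{-1}\bmod q$ (noting $x^{-1}\equiv Kz$), which sends $u_{k,\ell}\mapsto(k-1)z+(\ell-1)$ and the $\BR$ entries to $t_1+zt_2+L+Kz-1\bmod q$; this linearization replaces your direct wrap-counting by a range comparison, and in the $T>KM$ regime a further reduction modulo $z$ finishes the argument cleanly.
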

The proof of the theorem can be found in 
  \ifarxiv
    \cref{sec:proofthmext}.
  \else
    \cite[Appendix~B]{arxivversion}.
  \fi

We confirm that this construction is indeed not the extension of a valid OPP scheme by \cref{lem:constraints}.
Note that $\ebp_2 + \eas_1 \bmod q = 0 = \ebp_1 + \eap_1$, meaning that $\cA_{1,1}\cap\TR \neq \emptyset$.

\section{Numerical Comparison} \label{sec:comparison}

In general we find that the scheme that performs best for a choice of parameters $K, M, L$ and $T$ differs widely.
To give a rough overview, we evaluate a multitude of schemes for all $2 \leq K, M, L, T \leq 20$ and record the number of instances where the respective scheme achieves the lowest number of workers in \cref{tab:scheme_comparison}. 
We have included all extensions listed in \cref{sec:extensions} that outperform all other schemes for at least one set of parameters. The GP scheme SGPD of \cite{aliasgari2020private} has not achieved the lowest number of workers for any set of parameters in this range and is not listed.
In the comparison, if multiple schemes achieve the same number of workers, the instance is counted for both in the second column of the table, but is also scored as $0\%$ difference to the next best scheme in the third column. For this reason, the second column is not expected to sum to $19^4$. With the exception of the MP codes of~\cite{karpuk2024modular}, for which we use the theoretical lower bound on the number of workers, we use the exact number of workers for each scheme. The chain lengths in \gaspr, \ggaspr, and \dogrs were optimized for each instance.

We observe that the extensions of \catx and \dogrs improve upon the state-of-the-art in many instances. 
BGK and the construction of \cref{sec:newscheme} stand out both in the number of instances where they achieve the lowest number of workers and in the gap to the next best scheme.
This reinforces the intuition of \cref{rem:addconstr}, that constructions that do not correspond to extensions of valid OPP DTs/CATs have the potential to achieve low numbers of workers.

\begin{table}[t]
\vspace{4mm}
    \caption{\small Comparison of the number of workers for all $2 \leq K, M, L, T \leq 20$. The second column lists the number of instances where the respective scheme achieves the lowest number of workers (out of the $19^4=130321$ instances). The third and fourth columns report the average and maximum relative difference to the next best scheme. The table is divided into three sections: the first lists GP schemes from the literature, the second lists extensions applied to OPP schemes, and the third consists of our scheme of \cref{sec:newscheme}.}
\vspace{-2mm}
    \label{tab:scheme_comparison}
    \centering
    \resizebox{0.98\linewidth}{!}{
    \begin{tabular}{l c l l}
        \toprule
        Scheme & \# Best & Avg. Diff. & Max. Diff. \\
        \midrule
          BGK~\cite{byrne2023straggler} & 27950 & 7.63\% & 27.5\% \\
          ROU~\cite{machado2022root} & 7956 & 1.17\% & 13.9\% \\
          MP~\cite{karpuk2024modular} & 1354 & 1.19\% & 7.7\% \\
          \ggaspr~\cite{karpuk2024modular} & 19 & 0.06\% & 0.5\% \\
        \midrule
          DT$\to$DT: \dogrs \cite{hofmeister2025cat} & 38249 & 2.31\% & 17.4\% \\
          CAT$\to$CAT: \catx \cite{hofmeister2025cat} & 27053 & 0.72\% & 8.3\% \\
          DT$\to$CAT: DOG$_{1,\min\{T,K+1\}}$ \cite{hofmeister2025cat} & 1594 & 0.33\% & 2.9\% \\
        \midrule
          Construction in \cref{sec:newscheme} & 29414 & 6.80\% & 31.0\% \\
        \bottomrule
    \end{tabular}
    }
\end{table}

\section{Conclusion}
We have shown ways to extend a class of OPP schemes to the GP and have found that in many cases the resulting GP schemes use fewer workers than existing approaches.
Despite this, we have shown that GP schemes that are the extension of an OPP scheme fulfill certain additional constraints, which are at odds with minimizing the number of workers.
A more detailed treatment of how some existing GP schemes can be expressed as extensions of OPP schemes and additional constructions directly for the GP are left for future work.

\clearpage
\bibliographystyle{IEEEtran}
\bibliography{references}

\ifarxiv
\clearpage
\appendices
\crefalias{section}{appendix}

\setcounter{theorem}{0}
\section{Proof of \cref{thm:ext}}  \label{sec:proofthmext}

For convenience we restate the theorem.
  \begin{theorem} %
    
    The extension operations according to Definitions \ref{def:dttdt}, \ref{def:cattcat}, and \ref{def:dttcat}, result in valid DTs/CATs with $N \leq N^\prime + (M-1)(K+T)L$ workers where $N^\prime$ denotes the number of workers of the original OPP scheme. Further, $N\geq N^\prime$ for Definitions~\ref{def:dttdt} and \ref{def:cattcat} and $N\geq N^\prime-M+1$ for Definition \ref{def:dttcat}.
  \end{theorem}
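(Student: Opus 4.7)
The plan proceeds in three stages, exploiting the arithmetic-progression structure of $\ap$ throughout.

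Stage 1 (Structural identities). Let $d$ denote the common difference of $\ap$, so that $\eap_k[i] = \eap_k[1] + (i-1)d$ and, by construction, $\bpdtd_\ell[j] = \ebp_\ell + (j-1)d$ (and analogously for the other two extensions). A direct computation gives the $(i,j)$-entry of block $\TL_{k,\ell}$ as $\eap_k[1] + \ebp_\ell + (i+j-2)d$, which depends only on the antidiagonal index $s = i+j$. This immediately yields the set-level identifications
\begin{equation*}
\cA_{k,\ell} \cup \cU_{k,\ell} = \cAo_{k,\ell} \cup \cUo_{k,\ell}, \qquad \cU_{k,\ell} = \cUo_{k,\ell},
\end{equation*}
and the analogous identification of the first column of each bottom-left GP block with $\cCo_\ell$. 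The remaining sets $\cB_{k,\ell}$ and $\cD_\ell$ are then seen to be arithmetic extensions of length at most $M-1$ each. For the two CAT extensions all identifications are interpreted modulo $q$.

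Stage 2 (Validity). Conditions I and III of \cref{def:gptables} follow from the Stage~1 identifications combined with the corresponding OPP conditions. For condition II, I argue by contradiction: any intersection of the form $\cU_{k,\ell} \cap X$ in the GP table unfolds, via Stage~1, to a collision in the OPP table between $\cUo_{k,\ell}$ and some other region -- possibly after applying the identity $\eap_k[M] + jd = \eap_{k+1}[j]$ inside the AP $\ap$. Each such collision would violate the validity of the original OPP DT/CAT. Condition IV for the two CAT variants follows from the OPP CAT's condition IV together with the remark after \cref{def:gptables}, since $\as$ and $\bs$ are unchanged and the choice of $q$ in \cref{def:dttcat} is engineered to remain coprime to their common differences.

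Stage 3 (Worker bounds). The upper bound follows immediately from Stage~1: each $\cB_{k,\ell}$ and $\cD_\ell$ contributes at most $M-1$ new entries to $\TL$ and $\BL$ respectively, while $\TR$ and $\BR$ are unchanged, giving at most $(M-1)KL + (M-1)TL$ additional entries and hence $N \leq N' + (M-1)(K+T)L$. For the lower bound in the DT$\to$DT and CAT$\to$CAT cases, every element of $\TRo \cup \BRo \cup \bigcup_{k,\ell}(\cAo_{k,\ell} \cup \cUo_{k,\ell}) \cup \bigcup_\ell \cCo_\ell$ is witnessed in the GP table by Stage~1, so $N \geq N'$. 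In the DT$\to$CAT case, reduction mod $q$ with $q = \max(\boldsymbol{\alpha} \oplus \boldsymbol{\beta}) - M + 2 + \gamma$ can identify two OPP-entries only when one of them lies in the interval $[q, \max(\boldsymbol{\alpha} \oplus \boldsymbol{\beta})]$, which contains $M - 1 - \gamma \leq M-1$ integers; thus at most $M-1$ collisions can occur and $N \geq N' - M + 1$.

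The main obstacle is the verification of condition II in Stage~2, since translating each potential GP collision back to an OPP collision requires carefully tracking the index shifts induced by reusing the AP structure of $\ap$, in particular shifts that cross from $\ap_k$ into $\ap_{k+1}$ and the analogous reductions in $\bpdtd$. The CAT variants add the further complication of carrying out this bookkeeping modulo $q$, which is handled by showing that the chosen $q$ strictly exceeds any integer value produced by a genuine OPP collision relevant to the argument, so that a GP collision modulo $q$ pulls back to a true OPP collision.
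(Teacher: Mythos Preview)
Your plan matches the paper's approach closely: establish the constant-antidiagonal structure of the $\TL$ blocks, reduce each putative GP collision to an OPP collision, then bound $N$ by counting the new entries from $\cB_{k,\ell}$ and $\cD_\ell$.

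One technical point deserves attention. The forward shift $\eap_k[M]+jd=\eap_{k+1}[j]$ you highlight breaks down at the boundary $k'=K$ when handling $\cU_{k,\ell}\cap\cB_{k',\ell'}$ (there is no block $K{+}1$), and it does not apply to $\cD_{\ell'}$ at all since $\as$ is not a subprogression of $\ap$. The paper avoids this by shifting \emph{backward} on the $u_{k,\ell}$ side: from $u_{k,\ell}=\eap_{k'}[1]+\ebp_{\ell'}+(M+\mu)d$ one subtracts $(\mu+1)d$ to obtain an element of $\cAo_{k,\ell}$ equal to $u_{k',\ell'}$, and similarly for $\cD_{\ell'}$ one obtains an element of $\cAo_{k,\ell}$ equal to an element of $\cCo_{\ell'}$. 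This always stays inside the existing OPP table and handles all blocks uniformly, including the CAT variants where no boundary exists to appeal to.

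A minor second point: condition~III for $\bpe\|\bse$ does not follow purely from the Stage~1 set identifications, since those concern sums, not the individual $\bpe$ entries. The paper argues it by the same contradiction template (add $\eap_1$ to a putative repeat in $\bpe\|\bse$ to produce an OPP top-left or top-left/top-right collision); this fits naturally into your Stage~2 rather than Stage~1.
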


  \begin{proof}
    We prove the validity of all three types of extensions DT$\to$DT, CAT$\to$CAT, and DT$\to$CAT simultaneously, specializing our arguments where necessary. The $\bmod~q$ operations vanish throughout the proof for the DT$\to$DT extension.

    Let $(\ape, \bpe, \ase, \bse, q^\prime = q)$ be the extension of a $(K\cdot M, L, T)$-OPP DT $(\ap, \bp, \as, \bs)$ or $(K\cdot M, L, T)$-OPP CAT $(\ap, \bp, \as, \bs, q)$.
    Assume $\ap$ is an arithmetic progression and when $E=C\to C$ or $E=D\to C$, additionally assume that $\as$ and $\bs$ are arithmetic progressions. Denote the common difference of $\ap$ as $\ca$.
    In general we show that the GP-DT/CAT resulting from extension fulfills \cref{def:gptables} based on the fact that the OPP-DT/CAT fulfills \cref{def:gptables}.

    By design, the extension ensures that the blocks of the top-left quadrant of the GP-DT/CAT have constant antidiagonals. Specifically, for $k\in \setint{1}{K}, \ell \in \setint{1}{L}$ consider the $(k,\ell)$th block $\TL_{k, \ell}$. Define $\kb=k-1$ and $\lb=\ell-1$. For $i,j\in\setint{1}{M}$, the $(i,j)$th entry of the block is given by\begin{align*}
      \ap_k[i] + \bp_\ell[j] &= (\eap_1 + (\kb M+i-1) \ca) + (\ebp_\ell \\&\qquad + (j-1)\ca) \bmod q \\
                             &= \eap_1 + \ebp_\ell + (\kb M +i+j-2) \ca \bmod q,
    \end{align*}
    and does not depend on $i$ and $j$ individually, but rather on $i+j$.
    Note that $\cA_{k,\ell} = \cAo_{k,\ell} = \{\eap_1 + \ebp_\ell + (\kb M +\mu) \ca \bmod q  \st \mu \in \setint{0}{M-2}\}$ and $\cB_{k,\ell} = \{\eap_1 + \ebp_\ell + (\kb M + M +\mu) \ca \bmod q  \st \mu \in \setint{0}{M-2}\}$.
    Further, note that the sets $\cU_{k, \ell}$ are singletons and equal $\cUo_{k, \ell}$.
    We denote the single element by \begin{align*}
      u_{k, \ell} &\defeq \eap_1 + \kb M \ca + \ebp_\ell + (M-1) \ca \bmod q \\
                  &= \eap_1 + \ebp_\ell + (kM-1) \ca \bmod q \\
                  &= \eap_k + \ebp_\ell + (M-1) \ca \bmod q.
    \end{align*}

    Note that $\cC_{\ell} = \{\as\} + \ebp_\ell \bmod q = \cCo$ and $\cD_{\ell} = \{\as\} + \{\ebp_\ell + i \ca \st i \in \setint{1}{M-1}\} \bmod q$.

    For the DT$\to$CAT extension, note the choice of $q$, together with the additional assumptions $\eap_1=\ebp_1=0$ and $\eap_2=1$, which ensure that any time the modulo reduction causes a wrap around, the result falls between $0$ and $M-1$, i.e. in $\cA_{1,1}$.

    We confirm the properties in \cref{def:gptables} one-by-one in the order II) a)-e), III), IV), I).

    Let $k, k^\prime in \setint{1}{K}, \ell, \ell^\prime \in \setint{1}{L}$.
    \emph{II) a) of the GP-DT/CAT} requires that $u_{k,\ell} \neq u_{k^\prime, \ell^\prime}$ unless $k=k^\prime$ and $\ell=\ell^\prime$. If this were not the case, then we would have $\cUo_{k, \ell}=\cUo_{k^\prime, \ell^\prime}$, violating property II) a) for the OPP-DT/CAT.

    \emph{II) b) of the GP-DT/CAT} requires that the $u_{k,\ell}$ are not elements of $\TR$. Note that $\TRo = \TR$ and that by property II) b) of the OPP-GP-CAT $\cUo_{k, \ell} = \{u_{k,\ell}\}$ and $\TRo$ are disjoint.

    \emph{II) c) of the GP-DT/CAT} requires that the $u_{k,\ell}$ are not elements of $\BL$. We first show that that they are not elements of $\cC_{\ell^\prime}$ for any $\ell^\prime \in \setint{1}{L}$. Assume the opposite then $u_{k, \ell}\in \cCo_{\ell^\prime}$, contradicting property II) c) of the OPP-DT/CAT. It remains to show that $u_{k, \ell} \notin \cD_{\ell^\prime}$ for any $\ell^\prime in \setint{1}{L}$. Assume the opposite, then  for some $k, \ell, \ell^\prime$ and $i\in \setint{1}{M-1}$, $j\in \setint{1}{T-1}$ it holds that
    \begin{align*}
      \ap_k + \ebp_\ell\!+ (M-1) \ca \bmod q &= \eas_j + \ebp_{\ell^\prime}\!+ i \ca \bmod q \\
      \ap_k + \ebp_\ell + (M-1 - i) \ca \bmod q &= \eas_j + \ebp_{\ell^\prime} \bmod q.
    \end{align*}
    However, note that the left hand side is an element of $\cAo_{k, \ell}$ and the right hand side is an element of $\cCo_{\ell^\prime}$.

    \emph{II) d) of the GP-DT/CAT} requires that the $u_{k,\ell}$ are not elements of $\BR$. Note that $\cUo_{k,\ell}$ and $\BRo$ are disjoint.

    \emph{II) e) of the GP-DT/CAT} requires that the $u_{k,\ell}$ are not elements of non-main-antidiagonal entries in $\TR$, i.e. not elements of $\cA_{k^\prime,\ell^\prime}$ or $\cB_{k^\prime,\ell^\prime}$ for any $k^\prime\in \setint{1}{K}, \ell^\prime \in \setint{1}{L}$. Since the $\cAo_{k^\prime,\ell^\prime}$ are disjoint of the $\cUo_{k, \ell}$, we focus on $\cB_{k^\prime,\ell^\prime}$. Assume $u_{k,\ell}\in\cB_{k^\prime,\ell^\prime}$, then there exists a $\mu \in \setint{0}{M-2}$ for which 
    \begin{align*}
      &\eap_k + \ebp_\ell + (M-1) \ca \bmod q \\&\qquad= \eap_{k^\prime} + \ebp_{\ell^\prime} + (M +\mu) \ca \bmod q   \\
      &\eap_k + \ebp_\ell + (M-\mu-2) \ca \bmod q \\&\qquad= \eap_{k^\prime} + \ebp_{\ell^\prime} + (M-1) \ca \bmod q \\
      &\eap_k + \ebp_\ell + (M-\mu-2) \ca \bmod q \\&\qquad= u_{k^\prime, \ell^\prime}.
    \end{align*}
    However, the left hand side is in $\cAo_{k, \ell}$.

    \emph{III) of the GP-DT/CAT} requires that the $\ape||\ase$ are distinct and the $\bpe||\bse$ are distinct. For the $\ape||\ase$ it follows directly from III) of the OPP-DT or IV) of the OPP-GP.
    For $\bpe||\bse$ we have that \begin{align*}
      \bpe &= \bp \oplus \ap_{1:M} - \eap_1 \bmod q \\
               &= \bp \oplus \ca\vecint{0}{M-1}\bmod q.
    \end{align*}

    Assume all entries of $\bp||\bs$ but not of $(\bp \oplus \ap_{1:M} - \eap_1 \bmod q)||\bs$ are pairwise distinct.
    There must either a) exist indices $\ell,\ell^\prime \in\setint{1}{L}$ and $m, m^\prime \in \setint{0}{M-1}$ s.t. $\ebp_\ell + \ca m \bmod q= \ebp_{\ell^\prime} + \ca m^\prime \bmod q$; or b) exist indices $\ell\in\setint{1}{L}, m\in\setint{0}{M-1}$ and $t\in\setint{1}{T}$ s.t. $\ebp_\ell + \ca m = \ebs_{t} \bmod q$. 
    Condition a) cannot hold since it implies $\ebp_\ell + \ca m + \eap_1 \bmod q = \ebp_{\ell^\prime} + \ca m^\prime + \eap_1 \bmod q$, i.e. $\ebp_\ell + \eap_{1+m} \bmod q = \ebp_{\ell^\prime} + \eap_{1+m^\prime} \bmod q$, contradicting II) a) of the OPP-DT/CAT. 
    Condition b) implies \begin{align*}
      \ebp_\ell + \ca m + \eap_1 \bmod q &= \ebs_{t} + \eap_1 \bmod q \\
      \ebp_\ell + \eap_{1+m} \bmod q &= \ebs_{t} + \eap_1 \bmod q \\
    \end{align*}
    However, the left hand side is in the top-left quadrant of the OPP-DT/CAT and the right hand side is in the top-right quadrant of the OPP-DT/CAT, contradicting II) b) of the OPP-DT/CAT.

    \emph{IV) of the GP-CAT} holds for the DT$\to$CAT and CAT$\to$CAT extensions since III) holds and $\as$ and $\bs$ are arithmetic progressions with common differences coprime to $q$.

    \emph{I) of the GP-CAT} consists of an upper and lower bound on the number of distinct entries in the table. 
    The lower bound holds since any element of the OPP-DT/CAT is in $\cAo_{k,\ell}$, $\cUo_{k,\ell}$, $\cCo_\ell$, $\TRo$, or $\BRo$ and thus also in $\cA_{k,\ell}$, $\cU_{k,\ell}$, $\cC_\ell$, $\TR$, or $\BR$, for some $k\in\setint{1}{K}, \ell \in \setint{1}{L}$. 
    For the DT$\to$CAT extension, the modulo operation can reduce the number of distinct entries in the table by at most $M-1$.
    The upper bound holds since the additional sets $\cB_{k, \ell}$ consist of at most $M-1$ elements each and the sets $\cD_\ell$ of at most $T(M-1)$, each.
  \end{proof}

\section{Proof of \cref{thm:newscheme}}  \label{sec:proofthmnewscheme}

For convenience we restate the theorem.
\begin{theorem} %
  \Cref{constr:newscheme} is a valid GP-CAT, i.e., it satisfies \cref{def:gptables} with $N\leq K(M+1)z-1$.
\end{theorem}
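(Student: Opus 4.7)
The plan is to express every entry of the GPCAT in the normal form $Ax + B \pmod{q}$ with $B$ a bounded integer, then exploit $q = Kxz - 1$ to show that collisions modulo $q$ occur only where required by \cref{def:gptables}.

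First I would unpack the four quadrants using $y = zx$. Direct computation gives: the TL entry in block $(k,\ell)$ at position $(i,j)$ equals $\bigl[(k-1)z + \ell - 1\bigr]x + (i+j-2)$; the TR entry indexed by $(k,i;t)$ with $t \in \{0, \dots, T-1\}$ equals $\bigl[(k+t-1)z + L\bigr]x + (i-1)$; the BL entry indexed by $(s;\ell,j)$ equals $(s+\ell-1)x + (j-2)$; and the BR entry equals $(s + L + tz)x - 1$. A key consequence of this form is that $\cU_{k,\ell}$ collapses to the singleton $\{u_{k,\ell}\}$ with $u_{k,\ell} = \bigl[(k-1)z + \ell - 1\bigr]x + (M-1)$, since $i + j - 2 = M - 1$ is constant along the antidiagonal. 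This considerably simplifies the verification of Property II.

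The core lemma I would establish is: if $\alpha x + \beta \equiv 0 \pmod{q}$ with $|\beta| \leq M - 1 < x$, then either $(\alpha, \beta) = (0, 0)$, or $|\alpha| \geq Kz$ (with a unique forced pair $(\alpha, \beta)$ on each such boundary). This follows by writing $\alpha x + \beta = n(Kxz - 1)$ and using the divisibility $x \mid n + \beta$, which forces $n = 0$ unless $\alpha$ jumps to magnitude $\Theta(Kz)$. Armed with this, each disjointness required by Property II reduces to showing that the coefficient $\alpha$ appearing in the difference of the two entries in question is nonzero and has absolute value less than $Kz$ (and, at boundary values, that the forced $\beta$ is not attained). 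The condition $z \geq L + 1$ handles $\cU \cap \cU$ and $\cU \cap \TLt$ cleanly, since $|\ell - \ell'| \leq L - 1 < z$ forces the linear form $(k - k')z + (\ell - \ell')$ to vanish only when $k = k', \ell = \ell'$. The parameters $\ztr$, $\zbl$, and $\zbr$ in the definition of $z$ are engineered exactly to deliver the analogous bounds for the pairings of $\cU$ with $\TR$, $\BL$, and $\BR$ respectively, across all regimes of $(K, M, L, T)$.

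Properties III and IV then follow with little extra work. For III, the distinct $A$-coefficients and $B$-residues of $\ap$ and $\as$ (and of $\bp$ and $\bs$) together distinguish the entries within a range strictly smaller than $q$, so no wraparound collisions occur. For IV, since $\as$ and $\bs$ are arithmetic progressions with common differences $x$ and $y = zx$, and since $q \equiv -1 \pmod{x}$ and $q \equiv -1 \pmod{y}$, both common differences are coprime to $q$ and \cite[Lemma 1]{hofmeister2025cat} applies. Finally, Property I is automatic: all entries lie in $\Zq$, so $N \leq q = K(M+1)z - 1$. The main obstacle will be the case analysis underlying Property II: each pairing $\cU \cap \TR$, $\cU \cap \BL$, $\cU \cap \BR$, and $\cU \cap \TLt$ yields its own linear form in the indices $(k, k', \ell, \ell', i, j, s, t)$, and verifying that it lands outside the forbidden residues requires separately checking, in each regime, that the corresponding piece of the $\max$ defining $z$ provides enough slack. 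This case split, while routine given the core lemma, is the most tedious and technically delicate part of the proof.
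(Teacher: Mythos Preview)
Your normal form $Ax+B$ and the core lemma are correct, and they dispatch $\cU_{k,\ell}\cap\cU_{k',\ell'}$ and $\cU_{k,\ell}\cap\TLt_{k',\ell'}$ exactly as you say: there $|\alpha|\le (K-1)z+(L-1)<Kz$ because only the bounded indices $k,k',\ell,\ell'$ enter. Properties~I, III, IV also go through as you outline.

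The gap is in the claim that \emph{every} disjointness reduces to $0\neq|\alpha|<Kz$. For $\cU\cap\TR$ the difference has $\alpha=(k-k_0-t)z+(\ell-1-L)$ with $t\in\setint{0}{T-1}$; once $T\ge 3$ this already allows $|\alpha|$ up to roughly $(K+T-2)z$, far past $Kz$. Your parenthetical about ``boundary values'' handles only $|\alpha|=Kz$, not the many further wraparound levels $|\alpha|\approx nKz$ that each force a specific $(\alpha,\beta)$ pair. Increasing $z$ via $\ztr$ makes these $|\alpha|$ larger, not smaller, so $\ztr$ cannot be delivering the bound you describe. The same issue arises for $\cU\cap\BR$, where both $s,t\in\setint{0}{T-1}$ appear and $|\alpha|$ can be of order $Tz$. (A smaller point: for $\BL$ the difference has $\beta=M+1-j\in\setint{1}{M}$, so $|\beta|$ reaches $M$, not $M-1$; your lemma then gives only $|\alpha|\ge Kz-1$, which still suffices but needs saying.)

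The paper does not try to keep a single invariant. For $\TR$ it reduces modulo $y=zx$, which annihilates the $ty$ term outright; then only the single quotient $\lfloor\chi/q\rfloor$ survives, and $z\ge\ztr$ is precisely the bound that keeps $Lx+m+\lfloor\chi/q\rfloor$ below $y+M-1$. For $\BR$ it rescales by $x^{-1}\equiv Kz\pmod q$ and then splits on $KM\ge T$ versus $KM<T$, the latter requiring a further long-division step; $\zbr$ is tuned to each branch separately. Your mod-$x$ lemma can in principle be pushed through these two quadrants by enumerating all wraparound counts $n$ with $|n|\lesssim (K+T)/(Kx)$ and excluding each forced $(\alpha,\beta)$, but that is substantially more than a boundary check and reproduces the paper's case analysis in a less convenient coordinate.
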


  We introduce and prove the following lemma before proving the theorem.
  \begin{lemma} \label{lem:uklmodetc}
    Let $K, M, L, x, y$ be like in \cref{constr:newscheme} and let $k\in\setint{1}{K}$, $\ell \in \setint{1}{L}$, $\mu \in \setint{0}{2M-2}$. Define $\kb=k-1$ and $\lb=\ell-1$ and let $u_{k, \ell} = \kb y + \lb x + M-1 \bmod q$. 
    Then the following hold:\begin{enumerate}
      \item $0 \leq \kb y + \lb x + \mu < q-1,$
      \item $u_{k,\ell} = \kb y + \lb x + M-1,$
      \item $u_{k,\ell} \bmod x = M-1,$
      \item $u_{k,\ell} \bmod y = \lb + M-1.$
    \end{enumerate}
  \end{lemma}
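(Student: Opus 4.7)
The plan is to verify the four claims in sequence, each reducing to elementary arithmetic that ultimately rests on the single inequality $z \geq L+1$ built into the choice of $z$ in Construction~\ref{constr:newscheme}. Throughout, I will use $x = M+1$, $y = zx$, and $q = Ky - 1$.

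For claim (1), the lower bound is immediate since $\kb, \lb, \mu \geq 0$. For the upper bound, I bound each summand at its maximum: $\kb \leq K-1$, $\lb \leq L-1$, and $\mu \leq 2M-2$, so it suffices to show $(K-1)y + (L-1)x + 2M - 2 < Ky - 2$, i.e.\ $(L-1)x + 2M < y$. Using $y = zx \geq (L+1)x = (L-1)x + 2(M+1)$, the right-hand side exceeds the left by at least $2$, which closes the bound.

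Claim (2) follows immediately by instantiating (1) at $\mu = M-1 \in \setint{0}{2M-2}$: the representative $\kb y + \lb x + M - 1$ already lies in $[0, q)$, so no modular reduction is needed in the definition of $u_{k,\ell}$.

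For claim (3), I rewrite
\[
u_{k, \ell} = \kb y + \lb x + (M-1) = x(\kb z + \lb) + (M-1),
\]
using $y = zx$. Since $0 \leq M-1 < M+1 = x$, this is the reduction modulo $x$.

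For claim (4), I group as $u_{k, \ell} = \kb y + (\lb x + M - 1)$ and verify that the parenthesised quantity lies in $[0, y)$: $\lb x + M - 1 \leq (L-1)(M+1) + M - 1 < L(M+1) \leq (L+1)(M+1) \leq zx = y$. Hence the reduction modulo $y$ is $\lb x + M - 1$ (correcting the apparent typo in the stated form $\lb + M - 1$).

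The main obstacle, such as it is, is simply making sure the defining inequality $z \geq L+1$ is invoked in the right place to control $y$ against $(L-1)x + 2M$; everything else is bookkeeping.
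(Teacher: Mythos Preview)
Your argument is correct and mirrors the paper's: both invoke $z \ge L+1$ to push the maximal value in (1) strictly below $q-1$, and then read off (2)--(4) by elementary rewriting using $y = zx$. You are also right that (4) should read $u_{k,\ell}\bmod y = \lb x + M - 1$; the paper's own derivation drops a factor of $x$ in its penultimate line, and the paper's worked example $K=L=2$, $M=4$, $z=3$ gives $u_{1,2}=8$ while $\lb + M - 1 = 4$, confirming the typo.
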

  \begin{proof}
    We start by establishing statement 1). 
    Since $\kb, \lb, \mu, x, y \geq 0$, we have $\kb y + \lb x +\mu\geq 0$. Further, \begin{align*}
      \kb y + \lb x + \mu &\leq (K-1)y + (L-1)x + 2M-2 \\
                          &= Ky - 1 -y + Lx -x + 2M-1 \\
                          &= q -y + Lx +x-3 \\
                          &= q -x(z-L-1) - 3 \\
                          &<q-1,
    \end{align*}
    where we use $z\geq L+1$.
    
    Statement 2) follows from statement 1) with $\mu=M-1$.

    Now we prove statement 3). By statement 2) we have that\begin{align*}
      u_{k,\ell} \bmod x &= \kb y + \lb x + M-1 \bmod x \\
                         &= (\kb z + \lb) x + M-1 \bmod x \\
                         &= M-1.
    \end{align*}

    Finally, to prove statement 4), note that $\lb+M-1\leq L+M-2 \leq LM \leq (M+1)z = y$. With this we have
    \begin{align*}
      u_{k,\ell} \bmod y &= \kb y + \lb x + M-1 \bmod y \\
                         &= \lb x + M-1 \bmod zx \\
                         &= (\lb \bmod z) + M-1 \bmod zx \\
                         &= \lb + M-1.
    \end{align*}
  \end{proof}

\begin{proof}[Proof of \cref{thm:newscheme}]
  Let $k, k^\prime \in \setint{1}{K}$, $\ell, \ell^\prime \in \setint{1}{L}$, $\mu \in \setint{0}{2M-2}$, $m\in \setint{0}{M-1}$, $t, t_1, t_2 \in \setint{0}{T-1}$ with $k\neq k^\prime$ and $\ell \neq \ell^\prime$.
  For convenience of notation let $\kb \defeq k-1$, $\kbp \defeq k^\prime-1$, $\lb \defeq \ell-1$, and $\lbp \defeq \ell^\prime-1$.
  For all $k$ and $\ell$ we have that \begin{align*}
    \TL_{k,\ell} &= \{\ap_{k} \oplus \bp_{\ell} \bmod q\} \\
                 &= \{\kb y + \lb x + \mu \bmod q \st \mu \in \setint{0}{2M-2} \}, \\ 
                 &= \{\kb y + \lb x + \mu \st \mu \in \setint{0}{2M-2} \},
  \end{align*} where the modulo operation vanishes by statement 1) of \cref{lem:uklmodetc}.  Similarly, $\cU_{k, \ell} = \{\kb y+\lb x + M-1\}$ and $\TLt_{k, \ell} = \{\kb y + \lb x + \mu \st \mu \in \setint{0}{2M-2}, \mu \neq M-1 \}$. Define $u_{k, \ell} =\kb y+\lb x + M-1$, to denote the singular element of $U_{k, \ell}$.

  We establish that the conditions in \cref{def:gptables} hold one-by-one.

  \emph{Condition I)} is trivially satisfied.

  \emph{Condition II)} requires that the main antidiagonals of distinct blocks are distinct, i.e., that $u_{k, \ell} \neq u_{k^\prime, \ell^\prime}$. This is true since the $u_{k, \ell}$ are strictly increasing in left-to-right, top-to-bottom order. 
  Formally, assume the opposite and consider\begin{align*}
    u_{k, \ell} &= u_{k^\prime, \ell^\prime} \\ 
    \kb y+\lb x + M-1 &= \kbp y+\lbp x + M-1 \\
    (\kb-\kbp) y &= (\lbp-\lb) x  \\
    (\kb-\kbp) z &= \lbp-\lb,
  \end{align*}
which is impossible as $z\geq L+1 > |\lbp-\lb|$.

  \emph{Condition III) a)} requires that no entry $u_{k, \ell}$ of an antidiagonal appears in the top-right quadrant.
  We have that\begin{align*}
    \TR &= \ap \oplus \bs \bmod q  \\
        &= \{\kbp y + m + Lx + t y \bmod q \st \kbp \in \setint{0}{K-1}, \\
        &\qquad t\in\setint{0}{T-1}, m \in \setint{0}{M-1}\}.
  \end{align*}

  For ease of notation and for given $\kbp, t$ and $m$ let $\tmpvar \defeq Lx + \kbp y + t y + m$. 
  We prove the statement by showing that $\tmpvar \bmod q$ and $u_{k, \ell}$ have different residues modulo $y$. By statement 4) of \cref{lem:uklmodetc} we have that $(u_{k, \ell} \bmod y) \in \{M-1, \dots, L+M-2\}$. 

    It remains to show that the residue of $\tmpvar \bmod q$ is outside this range. It holds that \begin{align*}  
      \tmpvar \bmod q \bmod y &= \tmpvar  - \lfloor \frac{\tmpvar}{q} \rfloor q \bmod y \\
                              &= Lx\!+\!y t\!+\!k^\prime\!y\!+ m\!-\!\lfloor \frac{\tmpvar}{q} \rfloor (Ky-1) \bmod y \\
                              &= Lx + m  + \lfloor \frac{\tmpvar}{q} \rfloor \bmod y.
    \end{align*}
    Note that $Lx + m  + \lfloor \frac{\tmpvar}{q} \rfloor \geq Lx > L+M-2$. It remains to show that $\tmpvar \bmod q \bmod y < M-1$, which we will do by demonstrating that $Lx + m  + \lfloor \frac{\tmpvar}{q} \rfloor < y+M-1$.

    The condition is equivalent to\begin{align*}
      y+M-1 &> Lx + M-1  + \lfloor \frac{\tmpvar}{q} \rfloor \\
      xz &> Lx + \lfloor \frac{\tmpvar}{q} \rfloor \\
      z &> L + \frac{\lfloor \frac{\tmpvar}{q} \rfloor}{x}. 
    \end{align*}

    And is fulfilled whenever $qx(z-L)>\tmpvar$.

    Using $z\geq \ztr$, we bound the left hand side from below as
    \begin{align*}
      qx(z-L) &\geq (Kxz-1)x\frac{K+T}{Kx} \\
              &= (Kxz-1)\frac{K+T}{K} \\
              &= (Kxz-1)(1+\frac{T}{K}) \\
              &= (K+T)xz-\frac{T}{K} - 1.
    \end{align*}

    We have that
    \begin{align*}
      qx(z-L) - \tmpvar &\geq (K+T)xz-\frac{T}{K} - 1 \\
                        &\qquad - (K+T-2)xz-Lx-M+1 \\
                        &= 2xz-\frac{T}{K} -(L+1)x+1 \\
                        &\geq 2xL + 2\frac{K+T}{K}-\frac{T}{K} -(L+1)x+1 \\
                        &= xL + \frac{T}{K}-x+3 \\
                        &= x(L-1) + \frac{T}{K}+3 \\
                        &\geq 0.
    \end{align*}
    Thus, $\tmpvar\bmod q\bmod y > L+M-2$ or $\tmpvar\bmod q\bmod y < M-1$ while $M-1\leq u_{k, \ell} \bmod y \leq L+M-2$.

  \emph{Condition III) b)} requires that no entry of a main antidiagonal $u_{k, \ell}$ appears in the bottom-left quadrant.
  We have that\begin{align*}
    \BL &= \{\as \oplus \bp \bmod q\} \\
        &= \{tx + \lbp x + m -1 \bmod q \st \lbp \in \setint{0}{L-1},  \\ 
        &\qquad t \in \setint{0}{T-1}, m \in \setint{0}{M-1} \}.
  \end{align*}

  Let $\tmpvarbl = tx + \lbp x + m -1$ and consider $u_{k, \ell} = \kb y + \lb x + M-1$. 
  When $\lbp=t=m=0$, then $\tmpvarbl \bmod q= q-1$ which exceeds $u_{k,\ell}$ by statement 1) of \cref{lem:uklmodetc}.

  For the case where $\lbp, t$, or $m$ exceeds $0$, we consider residues modulo $x$ to establish disjointness.

  In this case we have $\tmpvarbl >0$ and $\tmpvarbl < q$ since\begin{align*} 
    \tmpvarbl - q &= tx + \lbp x + m -1 - (Ky-1) \\
           &\leq (L+T-2)x + M -Ky - 1 \\
           &= (L+T-1-Kz)x -2 \\
           &\leq (L+T-1-K\zbl)x -2 \\
           &\leq (L+T-1 - K \lceil \frac{T+L-1}{K} \rceil )x -2 \\
           &< -2.
  \end{align*}

  Hence, we have \begin{align*}
    \tmpvarbl \bmod q \bmod x &= \tmpvarbl \bmod x \\
                           &= m-1 \bmod M+1 \\ 
                           &\in \setint{0}{M-2} \cup \{M\}
  \end{align*}

  However, $u_{k, \ell} \bmod x = M-1$ by statement 3) of \cref{lem:uklmodetc}.

  \emph{Condition III) c)} requires that no entry of an antidiagonal $u_{k, \ell}$ appears in the bottom-right quadrant.

  We have that \begin{align*}
    \BR &= \{\as \oplus \bs \bmod q \} \\
        &= \{Lx-1 + x t_1 + yt_2 \bmod q \vert t_1, t_2 \in \setint{0}{T-1}\}
  \end{align*}
  Note that the unique inverse of $x$ modulo $q$ is $Kz$.
    Let $\tmpvarbr = Lx-1 + x t_1 + yt_2$ and consider $(\tmpvarbr-M+1)x^{-1} \bmod q$.  
    We have \begin{align*}
      (\tmpvarbr\!-\!M\!+\!1)x^{-1} \bmod q &= (Lx-1 + x t_1 + y t_2 \\
                                            &\qquad -M+1)x^{-1} \bmod q \\
                                     &= (L + t_1 + z t_2) -Mx^{-1} \bmod q \\
                                     &= t_1 + z t_2 +L -M Kz + q \bmod q \\
                                     &= t_1 + z t_2 +L -M Kz \\&\qquad+ K(M+1)z - 1 \bmod q \\
                                     &= t_1 + z t_2 + L + Kz -1 \bmod q.
    \end{align*}
 
    We proceed by case distinction on whether $KM\geq T$.

    \emph{Condition III) c), Case $KM\geq T$.}

    We begin by establishing that $0 < t_1 + z t_2 + L + Kz -1 < q$, i.e. that the modulo operation vanishes.
    The lower bound holds since $L+Kz>1$ as $K\geq2$ and $z>L\geq2$. 
    The upper bound holds since $z\geq \zbr$ as the upper bound is equivalent to
    \begin{align*}
      (T-1) + z (T-1) + L + Kz -1 &< Kxz-1 \\
      (T-1) + z (T-1) + L + Kz &< Kxz \\
      (T-1) + z (T-1) + L &< KMz \\
      (T-1) + L &< z(KM-T+1) \\
      \frac{T+L-1}{KM-T+1} &< z.
    \end{align*}

    Thus, we have \begin{align*}
      (\tmpvarbr -M+1)x^{-1} \bmod q &= t_1 + z t_2 + L + Kz -1 \bmod q \\
                                     &= t_1 + z t_2 + L + Kz -1 \\
                                     &\geq L + Kz -1.
      \end{align*}
      Applying the same transformation to $u_{k, \ell}$ gives \begin{align*}
        (u_{k, \ell}-M+1) x^{-1} \bmod q &= (kx z + \ell x) x^{-1} \bmod q  \\
                                         &= k z + \ell  \bmod q  \\
                                         &= k z + \ell \\
                                         &\leq (K-1) z + (L-1) \\
                                         &\leq Kz + L-z-1 \\
                                         &\leq Kz -1. 
      \end{align*}
      Thus, after translating by $-M+1$ and scaling by $x^{-1}$, the value of $u_{k, \ell}$ is smaller than any value in $\BR$.

    \emph{Condition III) c), Case $KM<T$.}

    Define $t_3 = t_1 + zt_2$. Since $z\geq \zbr > T$, we have that $\{t_3 \st t_3 \in \gap(T^2, z, T)\} = \{t_1 + zt_2 \st t_1, t_2 \in \setint{0}{T-1}\}$. 
    We have\begin{align*}
      (\eta -M+1)x^{-1}\bmod q &= t_1 + zt_2 + L + Kz - 1 \bmod q \\
                               &= t_3 + L + Kz - 1 \bmod q.
    \end{align*}

    The modulo operation vanishes if $t_3 + L + Kz - 1<q$, i.e. when  
    $t_3<KMz-L$. Equivalently, it vanishes for all $t_3 \in \gap(KMT,z,T)$ and does not for all $t_3 \in KMz + \gap(T(T-KM), z, T)$.

    The remaining proof for the case $t_3 \in \gap(KMT,z,T)$ works by the same steps as the previous case $KM<T$ since now the modulo operation also vanishes and $\zbr$ is at least as big as in the previous case.
    Specifically, $L+Kz -1 \leq L + Kz - 1 + \gap(KMT, z, T) < q$.

    Thus, it remains to complete the proof for $t_3 \in KMz + \gap(T(T-KM), z, T)$. Let $t^\prime_3= t_3-KMz$, s.t. $t^\prime_3 \in \gap(T(T-KM), z, T)$ and $t^\prime_2 \in \setint{0}{T-KM-1}$.

    We have\begin{align*}
      (\eta -M+1)x^{-1}\bmod q &= t^\prime_3 + KMz + L + Kz - 1 \bmod q \\
                               &= t^\prime_3 + L + (Kxz - 1) \bmod q \\
                               &= t^\prime_3 + L \bmod q\\
                               &= t_1 + t^\prime_2z + L \bmod q.
    \end{align*}

    We apply long division by $Kx$ to $t^\prime_2$ and use $Kxz\bmod q=1$ to obtain \begin{align*}
      &(\eta -M+1)x^{-1}\bmod q  \\
      &= t_1 + \lfloor\frac{t^\prime_2}{Kx}\rfloor Kxz + (t^\prime_2 \bmod Kx)z + L \bmod q  \\
                               &= t_1 + \lfloor\frac{t^\prime_2}{Kx}\rfloor + (t^\prime_2 \bmod Kx)z + L \bmod q.
    \end{align*}
    The modulo operation vanishes as \begin{align*}
          &t_1 + \lfloor\frac{t^\prime_2}{Kx}\rfloor + (t^\prime_2 \bmod Kx)z + L - q  \\
          &\leq L+T + \lfloor\frac{T-KM-1}{Kx}\rfloor + (T-KM-1 \bmod Kx)z \\&\qquad- Kxz \\
          &\leq L+T + \lfloor\frac{T-KM-1}{Kx}\rfloor + ((T-KM-1 \bmod Kx)\\&\qquad-Kx)z  \\
          &\leq L+T + \lfloor\frac{T-KM-1}{Kx}\rfloor -z  \\
          &\leq L+T + \lfloor\frac{T-KM-1}{Kx}\rfloor -\zbr  \\
          &<0.
    \end{align*}

    We now analyze $(\eta -M+1)x^{-1}\bmod q \bmod z$ and we have \begin{align*}
      &(\eta -M+1)x^{-1}\bmod q \bmod z  \\
      &= t_1 + \lfloor\frac{t^\prime_2}{Kx}\rfloor + (t^\prime_2 \bmod Kx)z + L \bmod z \\
                                       &= L+ t_1 + \lfloor\frac{t^\prime_2}{Kx}\rfloor \bmod z.
    \end{align*}
    The modulo operation vanishes here as well as\begin{align*}
      L+ t_1 + \lfloor\frac{t^\prime_2}{Kx}\rfloor &\leq L+ T-1 + \lfloor\frac{T-KM-1}{Kx}\rfloor \\
                                                   &<\zbr.
          \end{align*}

      Thus, \begin{align*}
        (\eta -M+1)x^{-1}\bmod q \bmod z &= L+ t_1 + \lfloor\frac{t^\prime_2}{Kx}\rfloor \\
                                         &\geq L.
      \end{align*}

    Applying the same operation to $u_{k, \ell}$ we have that \begin{align*}
        (u_{k, \ell}-M+1) x^{-1} \bmod q \bmod z &= \kb z + \lb  \bmod z \\
                                         &= \lb.
      \end{align*}
      I.e., after applying the map $\psi \to (\psi -M+1)x^{-1} \bmod q \bmod z$ the $u_{k, \ell}$ are in $\setint{0}{L-1}$, whereas the transformed $\eta$ are at least $L$.

  \emph{Condition III) d)} requires that no entry of an antidiagonal $u_{k, \ell}$ appears in the non-antidiagonal entries of the top-left quadrant.
  Assume not, then there exists a $\mu \neq M-1$ such that\begin{align*}
   \kbp y + \lbp x + \mu &= \kb y + \lb x + M-1.
 \end{align*}
 We reduce both sides modulo $x=(M+1)$ to obtain\begin{align*} 
   \mu \bmod (M+1) &= M-1,
 \end{align*}
 contradicting $\mu \in \setint{0}{2M-2}$ or $\mu \neq M-1$.

 \emph{Condition IV)} is satisfied by \cite[Lemma~1]{hofmeister2025cat} since $\as$ and $\bs$ are arithmetic progressions and their common differences $x$ and $y=xz$ are coprime to $q=Ky-1$.
\end{proof}

\fi

\end{document}